\documentclass[11pt]{amsart}

\textheight 198mm
\textwidth 130mm
\oddsidemargin +0truein
\evensidemargin +0truein
\usepackage{mathptmx}
\usepackage{graphicx}
\usepackage{epsfig}
\usepackage{pstricks}

\newtheorem{thm}{Theorem}[section]

\newtheorem{lemma}[thm]{Lemma}

\newtheorem{remark}[thm]{Remark}
\newtheorem{example}[thm]{Example}

\usepackage{amsmath}
\usepackage{amsxtra}
\usepackage{amscd}
\usepackage{amsthm}
\usepackage{amsfonts}
\usepackage{amssymb}
\usepackage{eucal}

\newcommand{\cA}{\mathcal A}

\newcommand{\cM}{\mathcal M}

\newcommand{\Z}{{\mathbb Z}}

\newcommand{\Y}{{\mathbb Y}}

\newcommand{\ba}{{\mathbf a}}
\newcommand{\bb}{{\mathbf b}}
\newcommand{\bc}{{\mathbf c}}
\newcommand{\bk}{{\mathbf k}}
\newcommand{\bm}{{\mathbf m}}

\newcommand{\bx}{{\mathbf x}}
\newcommand{\by}{{\mathbf y}}
\newcommand{\bu}{{\mathbf u}}
\newcommand{\bd}{{\mathbf d}}
\newcommand{\bR}{{\mathbf R}}
\newcommand{\bC}{{\mathbf C}}
\newcommand{\bK}{{\mathbf K}}
\newcommand{\bL}{{\mathbf L}}
\newcommand{\bF}{{\mathbf F}}
\newcommand{\bG}{{\mathbf G}}
\newcommand{\bA}{{\mathbf A}}
\newcommand{\bJ}{{\mathbf J}}
\newcommand{\bz}{{\mathbf z}}
\newcommand{\bw}{{\mathbf w}}
\newcommand{\bone}{{\mathbf 1}}
\newcommand{\al}{{\alpha}}

\numberwithin{equation}{section}

\begin{document}
\title{Discrete integrable systems, positivity and continued fraction rearrangements}
\author{Philippe Di Francesco} 
\address{Department of Mathematics, University of Michigan,
530 Church Street, Ann Arbor, MI 48190, USA
and Institut de Physique Th\'eorique du Commissariat \`a l'Energie Atomique, 
Unit\'e de Recherche associ\'ee du CNRS,
CEA Saclay/IPhT/Bat 774, F-91191 Gif sur Yvette Cedex, 
FRANCE. e-mail: philippe.di-francesco@cea.fr}

\begin{abstract}
In this review article,
we present a unified approach to solving discrete, integrable, possibly non-commutative,
dynamical systems, including the $Q$- and $T$-systems based on $A_r$. 
The initial data of the systems are seen as cluster variables in 
a suitable cluster algebra, and may evolve by local mutations. 
We show that the solutions are always expressed as Laurent polynomials 
of the initial data with non-negative integer coefficients. This is done
by reformulating the mutations of initial data as local rearrangements
of continued fractions generating some particular solutions, that preserve
manifest positivity. We also show how these techniques apply as well to 
non-commutative settings.

\bigskip

\noindent{\it Key words:} cluster algebras, Laurent phenomenon, positivity, integrable systems, non-commutative, continued fractions.

\noindent{\it 2010 Mathematics Subject Classification:} 05E10; 13F16; 82B20.

\end{abstract}

\maketitle
\date{\today}

\section{Introduction}

\subsection{The recursions}

In these notes, we mainly deal with the so-called $Q$- and $T$-systems based on the Lie algebra $A_r$. These systems and their generalization to other Lie algebras were introduced \cite{KR} \cite{KNS} in the combinatorial study of the completeness of the Bethe 
Ansatz states for the diagonalization of generalized Heisenberg quantum spin chains. Apart from their original representation-theoretic
interpretation as recursion relations for Kirillov-Reshetikhin characters ($Q$-systems \cite{KR}) or $q$-characters ($T$-systems \cite{FR} \cite{Nakajima}), these equations have reappeared since in various combinatorial contexts \cite{AH} \cite{KT} \cite{SPY}. We use here harmlessly renormalized variables that allow, by introducing sign flips, 
to connect the systems to cluster algebras with trivial coefficients,
and by a slight abuse of language, 
we will still call the corresponding equations $Q$- and $T$-systems.

Let $I_r=\{1,2,...,r\}$,
we define the $A_r$ $Q$-system:
\begin{eqnarray}
R_{\al,k+1}R_{\al,k-1}&=&R_{\al,k}^2+R_{\al+1,k}R_{\al-1,k}  \qquad (\al\in I_r;k\in\Z)\label{Qsys} \\
R_{0,k}&=&R_{r+1,k}=1 \qquad\qquad\qquad\qquad (k\in\Z)
\end{eqnarray}
and the $A_r$ T-system:
\begin{eqnarray}
\quad T_{\al,j,k+1}T_{\al,j,k-1}&=&T_{\al,j+1,k}T_{\al,j-1,k}+T_{\al+1,j,k}T_{\al-1,j,k}\quad 
(\al\in I_r;j,k\in \Z)\label{Tsys} \\
T_{0,j,k}&=&T_{r+1,j,k}=1
\end{eqnarray}
This involves quantities $T_{\al,j,k}$ with fixed parity of $\al+j+k$ (which we take $=0$ mod 2).

The latter system is readily seen to reduce to the former, upon ``forgetting" the index $j$. One
way to realize this is to view the solutions of the $Q$-system as those of the $T$-system that are 2-periodic in 
the index $j$, with the correspondence $R_{\al,k}=T_{\al,\al+k\, {\rm mod}\, 2,k}$.

The $T$-system may also be viewed \cite{DFK09a} as a non-commutative generalization of the $Q$-system, 
in which $R_{\al,n}\to \bR_{\al,n}$ is no longer scalar, but an invertible operator acting on a Hilbert space with 
basis $\{|j\rangle\}_{j\in\Z}$.
In this formulation, $T_{\al,j,k}$ are the matrix elements of 
the operator $\bR_{\al,k}$ in the basis.
For $\al=0,1,...,r+1$ we define the actions:
\begin{equation} \label{actrt}
\bR_{\al,k} |j+k+\al\rangle = T_{\al,j,k} |j-k-\al\rangle \qquad (j,k\in\Z)
\end{equation}
Defining also the shift operator $\bd$ acting as $\bd |j\rangle=|j-1\rangle$ and its formal inverse $\bd^{-1}$,
the $T$-system is readily seen to be equivalent to the non-commutative $A_r$ Q-system:
\begin{eqnarray}
\bR_{\al,k+1}\,\bR_{\al,k}^{-1}\,\bR_{\al,k-1}&=&\bR_{\al,k}+\bR_{\al+1,k}\,\bR_{\al,k}^{-1}\,\bR_{\al-1,k} 
\qquad  (\al\in I_r;k\in\Z) \label{ncQsysone}\\
\bR_{0,k}&=&\bd^{2k} \qquad \qquad \qquad \qquad \qquad \qquad (k\in\Z)\\
\bR_{r+1,k}&=&\bd^{2r+2k+2} \qquad \qquad    \qquad \qquad \qquad (k\in\Z)
\end{eqnarray}

In the following, we will also consider the fully non-commutative version of \eqref{ncQsysone}
in the case of $A_1$, namely the recursion relation
\begin{equation}\label{ncqsys}
\bR_{n+1}\bR_n^{-1}\bR_{n-1}=\bR_n+\bR_n^{-1}
\end{equation}
for variables $\bR_n\in \cA$, a unital algebra, and
which corresponds to \eqref{ncQsysone} for $r=1$
but with different boundary conditions $\bR_{0,n}=\bR_{2,n}=\bf 1$, the unit of $\cA$.

\subsection{Initial data}

To fix entirely their solutions, we must supplement the systems above
with a suitable set of initial data.

\begin{figure}
\centering
\includegraphics[width=12.cm]{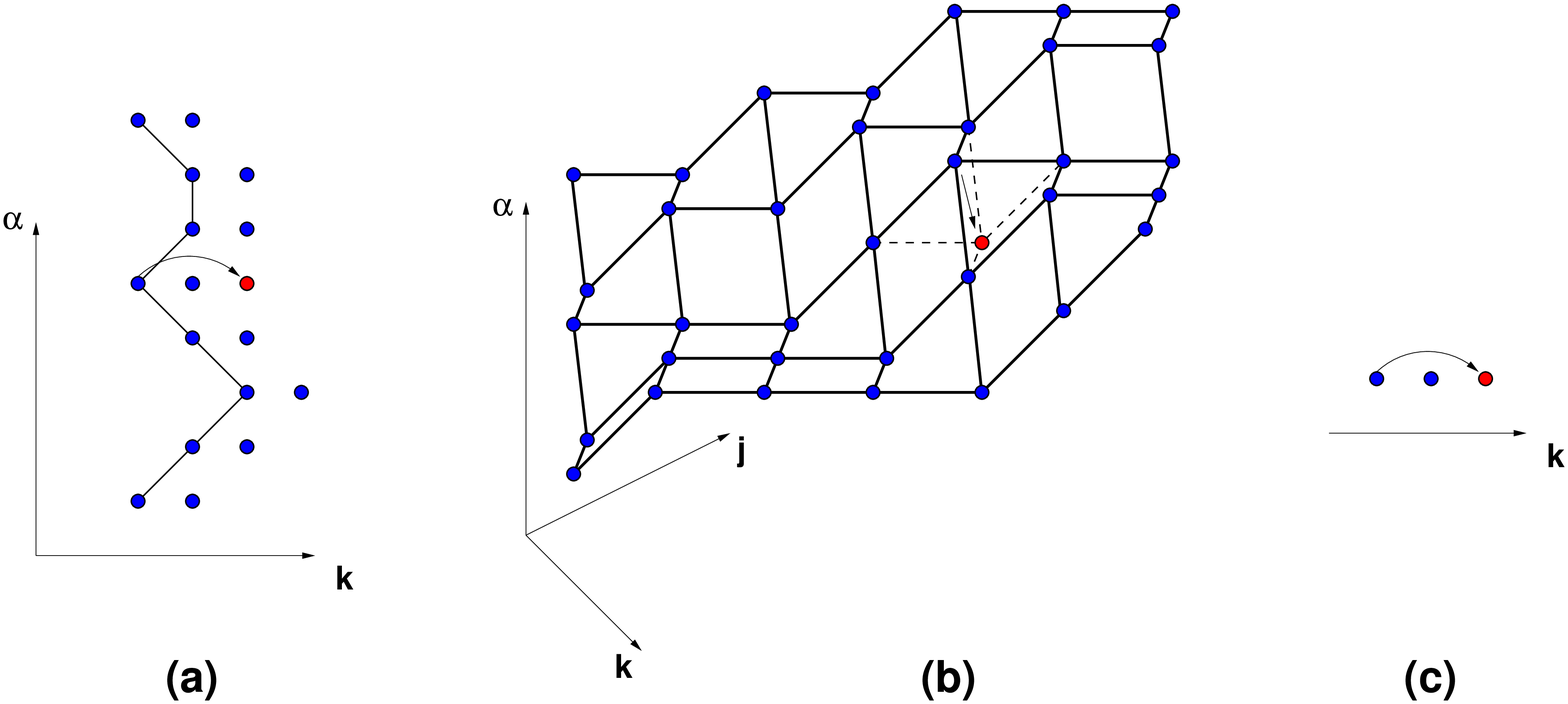}
\caption{\small Initial data structure for $Q$-systems (a), $T$-systems (b) and the non-commutative
$A_1$ $Q$-system (c). In all cases we indicate a sample forward mutation.}\label{fig:initial}
\end{figure}

For the $Q$-system, an obvious initial data set consists in fixing the values taken by
$\{R_{\al,m},R_{\al,m+1}\}_{\al\in I_r}$ for some fixed $m\in\Z$. Indeed, the 
system of recursion relations is a three-term relation in the variable $k$, hence fixing
all $R_{\al,k}$ for two consecutive values of $k$ allows to determine all other $R$'s.
Moreover the expression for $R_{\al,k}$ in terms of the 
$\{R_{\al,m},R_{\al,m+1}\}_{\al\in I_r}$ is the same as that for $R_{\al,k-m}$ in terms of
$\bx_0\equiv \{R_{\al,0},R_{\al,1}\}_{\al\in I_r}$, by translational invariance of the equations. We call
$\bx_0$ the fundamental initial data set. The most general admissible 
initial data sets are indexed by Motzkin paths $\bm=(m_1,m_2,...,m_r)\in\Z^r$ of length $r-1$,
with the property that $m_{i+1}-m_i\in \{0,1,-1\}$, and they read 
$\bx_\bm=\{R_{\al,m_\al},R_{\al,m_\al+1}\}_{\al\in I_r}$ (see Fig.\ref{fig:initial} (a)). 
For instance, the fundamental data set
$\bx_0=\bx_{\bm_0}$ is coded by the null Motzkin path $\bm_0=(0,0,...,0)$.
The initial data sets $\bx_\bm$ are related via
elementary moves, called mutations. We define the action of the 
(forward/backward) mutation $\mu_\al^{\pm}$, $\al\in I_r$
on a Motzkin path $\bm$ as $\mu_\al^{\pm}(\bm)=\bm'$, with $m_\beta'=m_\beta \pm \delta_{\al,\beta}$,
when $\bm'$ is also a Motzkin path (see Fig,\ref{fig:initial} (a) for an illustration). 
Assume $x_\bm$ is an admissible set of initial data,
then $x_{\bm'}$ differs from $x_\bm$ only by a substitution $R_{\al,m_\al}\to R_{\al,m_\al+2}$
(forward mutation $\mu_\al^+$) or
$R_{\al,m_\al+1}\to R_{\al,m_\al-1}$ (backward mutation $\mu_\al^-$). These substitutions
are done using the $Q$-system relation for $\al$, which involves only other terms
that are common to  $x_\bm$ and $x_{\bm'}$. It is easy to see that iterated mutations allow to connect 
the fundamental Motzkin path $\bm_0$ to any other Motzkin path $\bm$.

For the $T$-system, the obvious fundamental initial data set also consists of two layers with consecutive
values of $k$, hence $\bx_0=\{T_{\al,2j+\al,0},T_{\al,2j-1+\al,1}\}_{j\in \Z}$, up to translation. 
The most general admissible  initial data sets are indexed by ``stepped surfaces"
$\bk=(k_{\al,j})_{\al\in I_r;j\in \Z}$ with $k_{\al,j}\in \Z$ and the conditions that $k_{\al+1,j}-k_{\al,j}=\pm 1$
and $k_{\al,j+1}-k_{\al,j}=\pm 1$ (see Fig.\ref{fig:initial} (b)). 
The data set associated to the stepped surface $\bk$
is $x_\bk=\{T_{\al,j,k_{\al,j}}\}_{\al\in  I_r;j\in\Z}$.
The fundamental data set $\bx_0=\bx_{\bk_0}$ corresponds to the surface $\bk_0$ with
$k_{\al,j}=\al+j$ mod $2$. The initial data sets $\bx_\bk$ are related via
elementary moves, still called mutations. We define the action of the 
(forward/backward) mutation $\mu_{\al,j}^{\pm}$, $\al\in I_r$, $j\in \Z$,
on a stepped surface $\bk$ as $\mu_{\al,j}^{\pm}(\bk)=\bk'$, with 
$k'_{\beta,p}=k_{\beta,p} \pm 2\delta_{\al,\beta}\delta_{p,j}$,
when $\bk'$ is also a stepped surface (see Fig.\ref{fig:initial}(b) for an illustration). 
Assume $x_\bk$ is an admissible set of initial data,
then $x_{\bk'}$ differs from $x_\bk$ only by a substitution 
$T_{\al,j,k_{\al,j}}\to T_{\al,j,k_{\al,j}\pm 2}$. These substitutions
are done using the T-system relation for $\al$, which involves only other terms
that are common to  $x_\bk$ and $x_{\bk'}$. It is easy to see that iterated mutations allow to connect 
the fundamental stepped surface $\bk_0$ to any other stepped surface $\bk$.

For the non-commutative $A_1$ $Q$-system, admissible initial data have the form
$\bx_k=(\bR_k,\bR_{k+1})$, and (forward/backward) mutations lead to 
$\bx_{k\pm 1}$ (see Fig.\ref{fig:initial} (c)).

\subsection{Positivity and cluster algebra}

A common feature of all these systems is that they share the 
so-called positive Laurent property, namely that their solutions, when expressed in terms of any
of their admissible initial data, are Laurent polynomials with non-negative integer coefficients.

For the case of commuting variables this is actually part of a more general structure known
as cluster algebras \cite{FZI}. A rank $n$ cluster algebra (without coefficients)
is a dynamical system describing
the evolution of ``data" vectors $x_t=((x_t)_i)_{i=1}^n$ (clusters)
of $n$ formal variables (cluster variables), attached to the nodes of a Cayley tree
of degree $n$, with edges cyclically labeled $1,2,...,n$ around each node. This evolution 
is coded by a companion skew-symmetric $n\times n$ matrix $B_t$ with integer entries
called the exchange matrix, which evolves as well.
The pairs $(x_t,B_t)$ and $(x_u,B_u)$ at two adjacent nodes $t,u$ connected by an edge with label $k$
are related via the (involutive) mutation $\mu_k$, defined by $\mu_k(x_t,B_t)=(x_u,B_u)$ with:
\begin{eqnarray*}
(x_u)_i&=&\left\{ \begin{array}{ll} (x_t)_i & \hbox{if $i\neq k$} ;\\
{1\over (x_t)_k}\left\{\prod_{i=1}^n (x_t)_i^{[B_{k,i}]_+} +\prod_{i=1}^n (x_t)_i^{[B_{i,k}]_+}\right\}
& \hbox{otherwise}.\end{array}\right.\\
(B_u)_{i,j}&=&\left\{ \begin{array}{ll} -(B_t)_{i,j} & \hbox{if $i=k$ or $j=k$} ;\\
(B_t)_{i,j} + {\rm sign}((B_t)_{i,k})[(B_t)_{i,k}(B_t)_{k,j}]_+) & \hbox{otherwise}.\end{array}\right.
\end{eqnarray*}
Here we use the notation $[x]_+=$Max$(x,0)$. The exchange matrices $B$ are usually represented
pictorially as quivers, namely graphs with oriented possibly multiple
edges, with the condition that $B_{i,j}=$number of edges $i\to j$ when $B_{i,j}\geq 0$.
The finiteness restriction of rank $n$ may be removed, while keeping the same definitions,
leading to the notion of cluster algebras of infinite rank.
The main property of cluster algebras is the so-called Laurent property \cite{FZLaurent}\cite{FZI}: 
every cluster variable of any node $u$ may be expressed as a Laurent polynomial of the
cluster variables at any other node $t$. The positivity conjecture states that this 
polynomial always has non-negative integer coefficients \cite{FZI}, and
was only proved in a few cases so far 
(e.g. in the so-called acyclic cases \cite{POSIT} \cite{FZIV} \cite{ARS},
or that of clusters arising from surfaces \cite{MSW}).

The above $Q$- and $T$-systems have their variables belonging to specific cluster algebras
in the following way \cite{Ke07}\cite{DFK08}. 
The initial data sets turn out to be some particular clusters
of a cluster algebra of rank $2r$ ($A_r$ $Q$-system) or of infinite rank ($T$-system). To specify
the cluster algebras, we only need to give the exchange matrix $B_0$ at the fundamental
node $0$ of the Cayley tree, 
while $x_0=(R_{1,0},R_{2,0},...,R_{r,0},R_{1,1},...,R_{r,1})$ (Q-system) or 
$x_0=(T_{\al,j,\al+j\, {\rm mod}\,  2})_{\al\in I_r;j\in\Z}$ (T-system) 
is the suitably ordered fundamental set of initial data 
$\bx_{\bm_0}$ or $\bx_{\bk_0}$ defined above.  
For the $A_r$ $Q$-system, we have 
$B=\begin{pmatrix}0 & -C \\ C & 0\end{pmatrix}$ where $C$ is the 
Cartan matrix of $A_r$, $C_{a,b}=2\delta_{a,b}-\delta_{|a-b|,1}$.
For the $T$-system, $B_0$ is infinite. It has the entries:
$$(B_0)_{(\al,j),(\beta,k)}=(-1)^{\al+j}\left(\delta_{j,k}
\delta_{|\al-\beta|,1}-\delta_{\al,\beta}\delta_{|j-k|,1}\right)$$
with $\al,\beta\in I_r$ and $j,k\in\Z$.

\subsection{Results}

The aim of these notes is to present a unified approach for proving Laurent 
positivity in all three cases described above, and to extend it to more general cases.
This is essentially based on joint work with R. Kedem \cite{DFK3} \cite{DFK09a} \cite{DFK09b} \cite{NewRKPDF}.
The strategy consists of three steps. {\bf Step 1}: construct quantities conserved
modulo the evolution equations (discrete integrals of motion), and derive a linear recursion relation
for some subset of solutions, from which all solutions can be obtained. {\bf Step 2}: rephrase the linear recursion
relation in terms of a finite continued fraction expression for the corresponding generating function,
which is manifestly positive in the initial data. In principle steps 1 and 2
can be repeated for each initial data set, but in practice, it is more efficient to only carry them out for one
set of initial data (say the fundamental one), and then use the {\bf Step 3}: express mutations as
rearrangements of the initial continued fraction, that preserve the manifest positivity.

Note that continued fraction language immediately translates to that of weighted paths 
on some target graphs.
The Laurent positivity results may all be rephrased in terms of partition 
functions of positively weighted paths on suitable graphs. 
The mutations that implement changes of initial data
(rearrangements of the continued fractions) may be interpreted as transformations of both
the target graph and the weights.

This leads to Theorem \ref{positQ} for the $A_r$ Q-system (Section \ref{qsysec}), a restricted version of
Theorem \ref{posiTall} for the $A_r$ $T$-system (Section \ref{tsysec}) and Theorem
\ref{ncqpositaone} for the fully non-commutative $A_1$ $Q$-system (Section \ref{ncaonesec}).

This approach is actually restrictive inasmuch as $T$-system solutions are concerned. Indeed, it
only covers the case of translationally invariant stepped surfaces $\bk$, with $k_{\al,j+2}=k_{\al,j}$
(but arbitrary attached initial data $T_{\al,j,k_{\al,j}}$).
The only allowed mutations are actually compound, i.e. infinite iterations of elementary mutations
that map such surfaces
to each-other. To include the most general case of stepped surfaces, one must switch to a 
different point of  view, by constructing for each stepped surface a weighted graph or 
network, from which the solutions can be computed explicitly. We shall not cover this case
in the present notes, and refer to \cite{DFT} for details.

The advantage of the present approach however is that it has a natural generalization to fully
non-commutative settings, for which such notions as non-commutative continued
fraction rearrangements exist. We illustrate this further in Sections 
\ref{ncranktwosec} and \ref{ncaffinesec}. 
The general case corresponding to non-commutative weighted paths is 
investigated in Section  \ref{ncgensec}.

\medskip
\noindent{\bf Acknowledgments.} We would like to thank especially R. Kedem
for a fruitful collaboration. We also thank S. Fomin for hospitality 
at the Dept. of Mathematics of the University of Michigan, Ann Arbor, 
while this work was completed.
We received partial support from the ANR Grant GranMa, the
ENIGMA research training network MRTN-CT-2004-5652,
and the ESF program MISGAM. 

\section{$A_r$ $Q$-system}\label{qsysec}

\subsection{Warmup: the $A_1$ case}\label{warmupsec}
We start with the $A_1$ $Q$-system \eqref{Qsys} for $r=1$, 
with $R_n\equiv R_{1,n}$:
\begin{equation}\label{Qsysaone}R_{n+1}R_{n-1}=R_n^2+1
\end{equation}
We have:
\begin{thm}
The solution $R_n$ of the $A_1$ $Q$-system \eqref{Qsysaone}, when expressed in terms
of any admissible initial data $\bx_i=(R_i,R_{i+1})$, is a Laurent polynomial with
non-negative integer coefficients.
\end{thm}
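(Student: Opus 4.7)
The plan is to follow the three-step strategy of the introduction, which in this warm-up case takes a particularly simple form. By translation invariance of \eqref{Qsysaone} it suffices to treat the fundamental data $\bx_0=(R_0,R_1)$; the general case $\bx_i$ follows by relabeling. \emph{Step 1:} A short manipulation using \eqref{Qsysaone} twice shows that $c:=(R_{n+1}+R_{n-1})/R_n$ is independent of $n$, i.e.\ a discrete integral of motion. Evaluating at $n=1$ and using $R_{-1}=(R_0^{2}+1)/R_1$ gives the manifestly positive Laurent expression
$$c\;=\;\frac{R_0^{2}+R_1^{2}+1}{R_0R_1},$$
and turns \eqref{Qsysaone} into the three-term linear recursion $R_{n+1}=cR_n-R_{n-1}$.

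\emph{Step 2 (positive continued fraction).} From the linear recursion, the generating function $F(t):=\sum_{n\geq 0}R_n t^n$ equals $(R_0-R_{-1}t)/(1-ct+t^2)$. The central move is to rewrite this as a manifestly positive finite $J$-fraction,
$$\frac{F(t)}{R_0}\;=\;\cfrac{1}{1-\dfrac{R_1}{R_0}\,t-\cfrac{\dfrac{1}{R_0^{2}}\,t^{2}}{1-\dfrac{R_{-1}}{R_0}\,t}}\,.$$
Clearing denominators, this reduces to the identity
$$\Bigl(1-\tfrac{R_1}{R_0}t\Bigr)\Bigl(1-\tfrac{R_{-1}}{R_0}t\Bigr)-\tfrac{t^{2}}{R_0^{2}}\;=\;1-ct+t^{2},$$
which is an immediate consequence of $c=R_1/R_0+R_{-1}/R_0$ and $R_{-1}R_1=R_0^{2}+1$. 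All three entries $R_1/R_0$, $1/R_0^{2}$, and $R_{-1}/R_0=(R_0^{2}+1)/(R_0R_1)$ are positive Laurent polynomials in $(R_0,R_1)$, so expanding the two nested geometric series term by term exhibits each coefficient $R_n/R_0$, and hence $R_n$, as a polynomial with non-negative integer coefficients in these positive Laurent monomials. Applying the same argument to $\sum_{n\geq 0}R_{-n}t^{n}$ (equivalently, exploiting the symmetry $n\mapsto 1-n$ that swaps $R_0\leftrightarrow R_1$) handles $n<0$ and so covers all $n\in\Z$.

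\emph{Step 3 (mutations/general initial data).} For general $\bx_i=(R_i,R_{i+1})$, translation invariance of \eqref{Qsysaone} yields an identical continued fraction for $\sum_{k\geq 0}R_{i+k}t^{k}/R_i$, with $(R_0,R_1,R_{-1})$ replaced throughout by $(R_i,R_{i+1},R_{i-1})$, and $R_{i-1}=(R_i^{2}+1)/R_{i+1}$ remains a positive Laurent polynomial in the new data. In cluster language, the elementary forward/backward mutations $\bx_i\mapsto\bx_{i\pm 1}$ act on the $J$-fraction by a simple index shift that manifestly preserves positivity.

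I expect Step 2 to be the main obstacle of the program: one must extract from a linear recursion carrying a minus sign a closed-form expression that is \emph{manifestly} positive in the initial data. The $A_1$ identity is elementary, but it is the prototype for the considerably more intricate positive continued fraction expansions (and their reinterpretations as weighted path models) that will be required for the $A_r$ $Q$-system and its $T$-system generalizations.
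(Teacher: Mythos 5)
Your proposal is correct and follows the same three-step strategy as the paper: reduction to $\bx_0$ by the translation and reflection symmetries, the conserved quantity $c=(R_{n+1}+R_{n-1})/R_n$ obtained from constancy of the discrete Wronskian, the resulting linear recursion, and a manifestly positive continued fraction for the generating function. The one genuine difference is the shape of that fraction. The paper uses the three-level Stieltjes-type form
$$\cfrac{1}{1-t\cfrac{y_1}{1-t\cfrac{y_2}{1-ty_3}}},\qquad y_1=\frac{R_1}{R_0},\ y_2=\frac{1}{R_0R_1},\ y_3=\frac{R_0}{R_1},$$
in which every coefficient is a Laurent \emph{monomial}; you use the two-level Jacobi-type form with coefficients $R_1/R_0$, $1/R_0^2$, $R_{-1}/R_0$, the last of which is a Laurent \emph{binomial}. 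Both are manifestly positive, and your verification of the identity $(1-\tfrac{R_1}{R_0}t)(1-\tfrac{R_{-1}}{R_0}t)-\tfrac{t^2}{R_0^2}=1-ct+t^2$ is correct, so your argument proves the theorem. What the paper's monomial form buys is the machinery needed downstream: the weights $y_i$ are exactly the step weights of a path model on the segment $[0,3]$, and mutations act on them by the local rearrangement Lemmas \ref{reartone} and \ref{reartwo}, which is the mechanism that generalizes to $A_r$ (where translation invariance alone no longer connects all initial data, unlike in your Step~3 for $A_1$, which is fine here). Indeed the paper later observes that Jacobi-type forms are in general \emph{not} manifestly positive for other Motzkin paths and must be rearranged back into the canonical monomial form; your $A_1$ Jacobi fraction happens to be positive only because $R_{-1}/R_0=y_2+y_3$ carries no subtraction.
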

\begin{proof}
We wish to express the general solution $R_n$ in terms of any admissible
initial data $\bx_i=(R_i,R_{i+1})$. Obvious symmetries of \eqref{Qsysaone} 
allow to restrict ourselves to $R_n$ for $n\geq 0$ and $\bx_0$ only. Indeed, 
introducing the two automorphisms $\sigma, \tau$
such that $\sigma(R_0)=R_1$ and $\sigma(R_1)=R_2$, 
while $\tau(R_0)=R_1$ and $\tau(R_1)=R_0$, we immediately see that $\sigma(R_n)=R_{n+1}$
and $\tau(R_n)=R_{1-n}$ for all $n\in \Z$, respectively from the invariances 
$n\to n+1$ and $n\to 1-n$ of the relation \eqref{Qsysaone}. If we have an expression 
$R_n=f_n(\bx_0)$, by applying to it $\sigma^i$ we get an expression 
$R_{n}=\sigma^i (f_{n-i}(\bx_0))=g_{n,i}(\bx_i)$ in terms of any $\bx_i$. Likewise, starting from
$R_n=f_n(\bx_0)$, $n\geq 0$, and applying $\tau$, we get $R_{1-n}=\tau(f_n(\bx_0))=g_n(\bx_0)$
which gives an expression for all $R_n$, $n<0$ in terms of $\bx_0$.
Moreover, if $f_n$, $n\geq 0$ is a Laurent polynomial with non-negative integer coefficients,
so are all the transformed expressions under iterations of $\sigma$ or $\tau$.

We now note that eq.\eqref{Qsysaone} expresses
that the ``discrete  Wronskian"
\begin{equation}
W_n=\left\vert \begin{matrix} R_{n+1} & R_n \\ R_n & R_{n-1} \end{matrix}\right\vert =1
\end{equation}
Therefore, we may write
$$W_{n+1}-W_n=0=\left\vert \begin{matrix} 
R_{n+2} +R_n & R_{n+1}+R_{n-1} \\ R_{n+1} & R_{n} \end{matrix}\right\vert$$
hence the columns of the resulting matrix must be proportional, and we write
$c_n={R_{n+2}+R_n\over R_{n+1}}={R_{n+1}+R_{n-1}\over R_n}=c_{n-1}=c$,
a constant independent of $n$. We compute:
\begin{equation}\label{consone}
c={R_{n+1}\over R_n}+{1\over R_n R_{n+1}}+{R_n\over R_{n+1}}
={R_1\over R_0}+{1\over R_1R_0}+{R_0\over R_1}
\end{equation}
This gives a conserved quantity (discrete first integral of motion) for the relation \eqref{Qsysaone}.
It may also be rephrased into a linear recursion relation for the $R_n$'s:
\begin{equation}\label{linrecone}
R_{n+1}-c R_n +R_{n-1}=0 \end{equation}
Introducing the generating function
$F_0(t)=\sum_{n=0}^\infty t^n \, R_n/R_0$, and defining the weights
$y_1=R_1/R_0$, $y_2=1/(R_0R_1)$ and $y_3=R_0/R_1=1/y_1$,
we may use \eqref{linrecone} to get:
\begin{eqnarray}
F_0(t)&=&\cfrac{1+({R_1\over R_0}-C)t}{1-ct +t^2}=
\cfrac{1-t(y_2+y_3)}{1-t(y_1+y_2+y_3)+y_1y_3t^2}\nonumber \\
&=&  \cfrac{1}{1-ty_1\cfrac{1-ty_3}{1-t(y_2+y_3)}}
=\cfrac{1}{ 1-t \cfrac{y_1}{1-t \cfrac{y_2}{1-t y_3}}}
\label{contifone}
\end{eqnarray}
The latter finite continued fraction expression has manifestly positive polynomial
coefficients of $y_1,y_2,y_3$ in its series expansion in $t$, and the Theorem follows
from the fact that $y_1,y_2,y_3$ are Laurent monomials in the initial data $\bx_0$.
\end{proof}

The continued fraction expression above may be interpreted as the
generating function for paths on the integer segment $[0,3]$, with steps $\pm 1$,
starting and ending at $0$ and with step weights $1$ for steps $i-1\to i$, $i=1,2,3$
and $t y_i$ for steps $i\to i-1$, $i=1,2,3$, each path being weighted by the product 
of its step weights.

The two possible forward mutations of the cluster algebra associated to \eqref{Qsysaone} read:
$\mu_1^+(R_{2i},R_{2i+1})=(R_{2i+2},R_{2i+1})$ and
$\mu_2^+(R_{2i},R_{2i-1})=(R_{2i},R_{2i+1})$, where the updates of initial data
are made by use of the relation \eqref{Qsysaone}. Let us express explicitly
the effect of the mutation $\mu_1^+$ on the continued fraction expression \eqref{contifone},
by noting the two following rearrangement Lemmas, both easily proved by direct calculation:

\begin{lemma}\label{reartone}
$$1+\cfrac{a}{1-a-b} =\cfrac{1}{1-\cfrac{a}{1-b}} $$
\end{lemma}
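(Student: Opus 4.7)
The plan is to verify the identity by direct algebraic manipulation of both sides, bringing each to a common simplified form and checking that they agree. Since the statement is an identity between rational expressions in two formal variables $a,b$, it suffices to work in the field of rational functions $\mathbb{Q}(a,b)$ (or in any suitable completion where the relevant denominators are invertible), so there is no convergence issue to worry about.

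First I would simplify the left-hand side by putting the two terms over a common denominator:
$$1+\frac{a}{1-a-b} \;=\; \frac{(1-a-b)+a}{1-a-b} \;=\; \frac{1-b}{1-a-b}.$$
Then I would simplify the right-hand side by collapsing the inner fraction:
$$\frac{1}{1-\dfrac{a}{1-b}} \;=\; \frac{1}{\dfrac{(1-b)-a}{1-b}} \;=\; \frac{1-b}{1-a-b}.$$
Comparing the two simplified expressions gives the identity.

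There is essentially no obstacle here; the only mild care needed is to track the implicit assumptions that $1-b\neq 0$ and $1-a-b\neq 0$, which is automatic in the formal power series setting in which the continued fractions of the surrounding proof live (the denominators have constant term $1$, so they are invertible as formal series). Since the identity lives in $\mathbb{Q}(a,b)$, the two sides agree as rational functions and therefore as formal series whenever both expansions are defined, which is exactly the regime in which Lemma \ref{reartone} will be applied to rearrange the continued fraction \eqref{contifone}.
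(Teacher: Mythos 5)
Your verification is correct and matches the paper's approach: the paper states that Lemma \ref{reartone} is ``easily proved by direct calculation,'' and your reduction of both sides to $\frac{1-b}{1-a-b}$ is exactly that calculation, with the appropriate remark about invertibility of the denominators in the formal power series setting.
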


\begin{lemma}\label{reartwo}
$$a+\cfrac{b}{1-\cfrac{c}{1-u}} =\cfrac{a'}{ 1-\cfrac{b'}{ 1-c'-u}} $$
provided
$$a'=a+b\qquad b'=\cfrac{b c}{a+b} \qquad c'=\cfrac{ac}{a+b}$$
\end{lemma}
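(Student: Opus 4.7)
The plan is to verify Lemma \ref{reartwo} by direct algebraic computation, reducing both sides to rational functions in $a,b,c,u$ and checking they coincide. Since the rearrangement formulas for $a'$, $b'$, $c'$ are given explicitly, the task is purely a verification — no guesswork is needed about how the new weights depend on the old ones.

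First I would simplify the left-hand side by clearing the nested denominators from the bottom up. Writing $1-c/(1-u) = (1-u-c)/(1-u)$, the middle fraction becomes $b(1-u)/(1-u-c)$, so
\begin{equation*}
\text{LHS} \;=\; a + \frac{b(1-u)}{1-u-c} \;=\; \frac{(a+b)(1-u) - ac}{1-u-c}.
\end{equation*}
Next I would simplify the right-hand side in the same way. The inner denominator is $1-c'-u-b'$, and the one crucial identity that makes everything collapse is
\begin{equation*}
b' + c' \;=\; \frac{bc}{a+b} + \frac{ac}{a+b} \;=\; c,
\end{equation*}
so this inner denominator reduces to $1-c-u$, matching the LHS denominator. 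Then
\begin{equation*}
\text{RHS} \;=\; \frac{a'(1-c'-u)}{1-c'-u-b'} \;=\; \frac{(a+b)\bigl(1-\tfrac{ac}{a+b}-u\bigr)}{1-c-u} \;=\; \frac{(a+b)(1-u) - ac}{1-c-u},
\end{equation*}
which agrees with the LHS.

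There is no serious obstacle: the entire content of the lemma is the precise choice of $a',b',c'$, which is reverse-engineered exactly so that $a'=a+b$ produces the matching numerator $(a+b)(1-u)-ac$ and $b'+c'=c$ produces the matching denominator $1-c-u$. The more conceptual question, which is hidden from the reader in this bare statement, is \emph{why} one rearranges the continued fraction in this particular way: it corresponds to absorbing the leading additive constant $a$ into a deeper level of the fraction while redistributing the weights. That combinatorial interpretation — in the path-generating-function picture announced after Lemma \ref{reartone} — is what will really matter later when these lemmas are applied to implement cluster mutations as continued-fraction rearrangements. For the purposes of the identity itself, however, clearing denominators suffices.
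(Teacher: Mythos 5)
Your verification is correct and matches the paper's approach exactly: the paper offers no written proof beyond the remark that both rearrangement lemmas are ``easily proved by direct calculation,'' which is precisely the denominator-clearing computation you carry out, with the key observations $a'=a+b$ and $b'+c'=c$ making both sides equal to $\bigl((a+b)(1-u)-ac\bigr)/(1-u-c)$. Nothing further is needed.
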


We write:
\begin{eqnarray*}
F_0(t)&=& \cfrac{1}{1-t \cfrac{y_1}{1-t \cfrac{y_2}{1-t y_3}}}=1+t y_1 F_1(t)\\
F_1(t)&=& \cfrac{1}{1-t y_1-t\cfrac{y_2}{1-ty_3}}=\cfrac{1}{1-t\cfrac{y_1'}{1-t\cfrac{y_2'}{1-ty_3'}}}
\end{eqnarray*}
where we have first applied Lemma \ref{reartone} and then Lemma \ref{reartwo}, with $u=0$ and:
\begin{eqnarray*}
y_1'&=&y_1+y_2={R_1^2+1\over R_0R_1}={R_2\over R_1}\\
y_2'&=&{y_2y_3\over y_1+y_2}={1\over R_1R_2}\\
y_3'&=&{y_1y_3\over y_1+y_2}={R_1\over R_2}
\end{eqnarray*}

We conclude that $F_1(t)$ is precisely the generating function $\sum_{n\geq 0} t^n R_{n+1}/R_1$
where the $R_n$'s are expressed in terms of $\bx_1=\mu_1^+(\bx_0)$. Note also that
$F_1(t)=\sigma(F_0(t))$, upon defining $\sigma(t)=t$. More generally, repeated
application of the rearrangement
Lemmas \ref{reartone} and \ref{reartwo} leads to iterated mutations.

We now have all the ingredients necessary for the generalization to the $A_r$ $Q$-system:
integrals of motion, continued fractions, rearrangements.

\subsection{The $A_r$ case}

The purpose of this section is to outline the proof the following:

\begin{thm}\label{positQ}
The solution $R_{\al,n}$ of the $A_r$ $Q$-system is a positive Laurent polynomial
of any admissible initial data $\bx_\bm$, for any Motzkin path $\bm$ of length $r-1$.
\end{thm}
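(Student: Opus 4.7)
The proof should proceed along the three-step program articulated in the introduction, generalizing the $A_1$ warmup. Throughout, I would use the obvious translation and reflection symmetries of the $Q$-system to reduce to expressing $R_{\al,n}$ for $n \geq 0$ in terms of the fundamental data $\bx_0 = \{R_{\al,0},R_{\al,1}\}_{\al \in I_r}$; the Laurent positivity for other Motzkin paths $\bm$ then follows once we know how to implement each elementary mutation $\mu_\al^{\pm}$ as a manifestly positive transformation of the generating function.

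For \textbf{Step 1}, the plan is to construct $r$ independent integrals of motion by generalizing the discrete Wronskian argument. Concretely, one expects that the determinants of the $(r+1)\times(r+1)$ ``Hankel--Wronskian'' matrices built from consecutive $R_{1,n+j}$ (together with appropriately chosen neighbors) are conserved modulo the $Q$-system, just as $W_n = R_{n+1}R_{n-1} - R_n^2 = 1$ is conserved in the $A_1$ case. Subtracting successive values of such determinants and exploiting column proportionality should yield a linear recursion of order $r+1$ for $R_{1,n}$ (in fact a Cayley--Hamilton-type relation), whose $r+1$ coefficients $c_0,\dots,c_r$ are the desired conserved quantities, expressible explicitly as Laurent polynomials in $\bx_0$.

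For \textbf{Step 2}, starting from the linear recursion for $R_{1,n}$ I would form the generating function $F_0(t) = \sum_{n\geq 0} t^n R_{1,n}/R_{1,0}$, which by construction is a rational function of $t$ with denominator of degree $r+1$. The key claim is that $F_0(t)$ admits a finite continued fraction expansion of depth $2r+1$,
\[
F_0(t) = \cfrac{1}{1 - t\cfrac{y_1}{1 - t\cfrac{y_2}{1 - t\cfrac{\ddots}{1 - t y_{2r+1}}}}},
\]
with weights $y_1,\dots,y_{2r+1}$ that are Laurent \emph{monomials} in $\bx_0$. Expanding this as a sum over lattice paths on $[0,2r+1]$ with steps $\pm 1$ makes positivity manifest. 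To get positivity of all $R_{\al,n}$ (not just $\al=1$), I would either appeal to the $Q$-system relation inductively in $\al$ (each $R_{\al+1,n}$ being expressible as a polynomial in positive things over previous $R$'s, which is delicate), or more robustly iterate the procedure by exhibiting analogous continued fraction formulas for the generating functions of each $R_{\al,n}$, obtained from $F_0$ by suitable compositions of the rearrangement moves below.

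For \textbf{Step 3}, the plan is to promote Lemmas \ref{reartone} and \ref{reartwo} to a toolkit of local continued fraction rearrangements that implement each forward mutation $\mu_\al^+$. Concretely, each $\mu_\al^+$ replaces one $R_{\al,m_\al}$ by $R_{\al,m_\al+2}$ via a single $Q$-system exchange, and I would show this translates into a local ``bubble'' of rearrangement moves acting on a bounded neighborhood of the continued fraction, producing new weights $y_i'$ that remain Laurent monomials in the updated data $\bx_{\bm'}$, with the positive rational transformations dictated by Lemma \ref{reartwo}. Iterating along a path in the Motzkin graph from $\bm_0$ to $\bm$ then yields a manifestly positive continued fraction for $F_{\bm}(t)$ in the variables $\bx_\bm$, proving the theorem. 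The \textbf{main obstacle} I anticipate is combinatorial bookkeeping in Step 3: verifying that the sequence of rearrangements for $\mu_\al^+$ stays local, that the resulting weights are Laurent monomials (rather than merely positive Laurent polynomials) in $\bx_{\bm'}$, and that the whole scheme is compatible with the compound rank-$r$ structure---this is where the interplay between the $r$ conserved quantities and the depth-$(2r+1)$ continued fraction must be carefully organized.
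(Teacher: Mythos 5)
Your Steps 1--3 track the paper's actual argument for the case $\al=1$: the conserved quantities do come from the discrete Wronskians $W_{\al,n}=\det_{1\leq a,b\leq\al}(R_{1,n-a-b+\al+1})$ (which satisfy the same bilinear relation by Desnanot--Jacobi, so that $W_{\al,n}=R_{\al,n}$ and $W_{r+1,n}=1$ yields the order-$(r+1)$ linear recursion), the generating function $F_{\bm_0}(t)$ is a finite continued fraction with Laurent-monomial weights, and mutations are implemented exactly by Lemmas \ref{reartone} and \ref{reartwo} restricted to the two local cases. Two small corrections: the fundamental fraction \eqref{contifrazer} is \emph{not} a pure Stieltjes fraction with $\pm1$ steps on $[0,2r+1]$; it has level terms $-ty_{2\al-1}$ and is a path generating function on $[0,r+2]$ with steps $\pm1,0$ (the pure Stieltjes form corresponds to the \emph{ascending} Motzkin path, not to $\bm_0$, and its coefficients are Laurent monomials in that other cluster). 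Also, under Lemma \ref{reartwo} the fraction does not stay in a fixed shape: it becomes multiply branching, and one needs the notion of skeleton versus redundant weights to see that all coefficients remain Laurent monomials of the $2r+1$ skeleton weights --- this is precisely the bookkeeping you flag as the main obstacle, and it is resolved by the explicit mutation rules (\ref{caseone})--(\ref{casetwo}) together with the closed formulas (\ref{oddync})--(\ref{evenync}) for $y_i(\bm)$, verified by induction on mutations.

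The genuine gap is the case $\al>1$. Your first alternative --- inducting on $\al$ through the $Q$-system relation $R_{\al+1,n}=(R_{\al,n+1}R_{\al,n-1}-R_{\al,n}^2)/R_{\al-1,n}$ --- cannot work as stated: it involves both a subtraction and a division by $R_{\al-1,n}$, neither of which preserves manifest Laurent positivity, so ``delicate'' understates the problem. Your second alternative (separate continued fractions for each $\al$) is not what is needed either and is not developed. The paper's resolution is to reuse the Wronskian identity $R_{\al,n}=W_{\al,n}=\det_{1\leq a,b\leq\al}(R_{1,n-a-b+\al+1})$ from your Step 1 a second time, now as a \emph{positivity} device: by a mild generalization of the Lindstr\"om--Gessel--Viennot lemma \cite{LGV1,LGV2}, this determinant of path partition functions is itself the partition function of families of strongly non-intersecting paths on the same weighted target graph $\Gamma_\bm$, hence a positive Laurent polynomial in $\bx_\bm$ (see \cite{DFK3}). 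Without this (or an equivalent combinatorial interpretation of the $\al\times\al$ Wronskian), your argument only proves the theorem for $R_{1,n}$.
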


As before, the symmetries of the system allow to restrict ourselves to $R_{\al,n}$ with $n\geq 0$ and to 
the Motzkin paths in a fundamental domain $\cM_r$ modulo global translation of all indices by $1$.

The $A_r$ $Q$-system \eqref{Qsys} possesses $r$ conserved quantities generalizing 
\eqref{consone}. These are obtained by repeating the argument of the $r=1$ case.

Introducing the discrete Wronskian 
\begin{equation}\label{discwron}
W_{\al,n}=\det_{1\leq a,b\leq \al} \big(R_{1,n-a-b+\al+1}\big)
\end{equation}
we have the relation $W_{\al,n+1}W_{\al,n-1}=W_{\al,n}^2+W_{\al+1,n}W_{\al-1,n}$
as a consequence of the Desnanot-Jacobi relation between the minors of any square matrix, 
also used by Dodgson for his famous condensation algorithm \cite{DOD}.
Moreover, we have $W_{1,\al}=R_{1,n}$ and $W_{0,n}=1$ by definition. 
We conclude that $W_{\al,n}=R_{\al,n}$, the solution of the $A_r$ $Q$-system. As a consequence,
we have the second ($A_r$-)boundary condition: $W_{r+1,n}=1$. Writing
$W_{r+1,n+1}-W_{r+1,n}=0$, we deduce the existence of a linear recursion relation
$$\sum_{a=0}^{r+1} (-1)^a c_a R_{n-a}=0$$
with $c_0=c_{r+1}=1$ and where $c_1,c_2,...,c_r$ are $r$ conserved 
quantities modulo \eqref{Qsys}. In Ref.\cite{DFK3}, these were expressed explicitly in terms
of the fundamental initial data $\bx_{\bm_0}$ corresponding to the null Motzkin path $\bm_0$,
as partition functions for hard particles on some target graph.

Introducing the generating function $F_{\bm_0}(t)=\sum t^n R_{1,n}/R_{1,0}$, the result of \cite{DFK3}
may be rephrased into the following continued fraction expression:
\begin{equation}\label{contifrazer}
F_{\bm_0}(t)=\cfrac{1}{1- t\cfrac{y_1}{1-t\cfrac{y_2}{1-ty_3-t\cfrac{y_4}{\ddots 
\cfrac{{}}{1-ty_{2r-1}-t\cfrac{y_{2r}}{
1-t y_{2r+1}}}}}}}
\end{equation}
where the weights $y_i\equiv y_i(\bm_0)$ are the following
explicit Laurent monomials of the initial data $\bx_{\bm_0}$:
\begin{equation}\label{ymzer}y_{2\al-1}={R_{\al,1}\, R_{\al-1,0}\over R_{\al,0}\, R_{\al-1,1}}
\qquad y_{2\al}={R_{\al+1,1}\, R_{\al-1,0}\over R_{\al,0} \, R_{\al,1}}
\end{equation}

\begin{remark}\label{rempath}
The continued fraction expression \eqref{contifrazer} has an immediate path interpretation.
We consider the paths on the integer segment $[0,r+2]$, from and to $0$, with steps $\pm 1,0$
with respective weights $1$ for steps $i-1\to i$, $t y_{2i}$ for steps $i+1\to i$ ($i=1,2,...,r$),
$t y_{2i-1}$ for steps $i\to i$ ($(i=2,3,...,r$), and $t y_1$ for the step $1\to 0$, $t y_{2r+1}$ for 
the step $r+2\to r+1$. Then $F_0(t)$ is the sum over all such paths
of the product of their step weights. Alternatively, this was interpreted in \cite{DFK3}
as the partition function for weighted paths on a graph $\Gamma_{\bm_0}$ with $2r+2$ vertices
labeled $0,1,2,...,r+2,2',3',...,r'$
and adjacency matrix $A$ with non-vanishing elements $A_{i,i+1}=A_{i+1,i}=1$ ($i=0,1,...,r+1$)
and $A_{i,i'}=A_{i',i}=1$ ($i=2,3,...,r$).
\end{remark}

\begin{figure}
\centering
\includegraphics[width=6.cm]{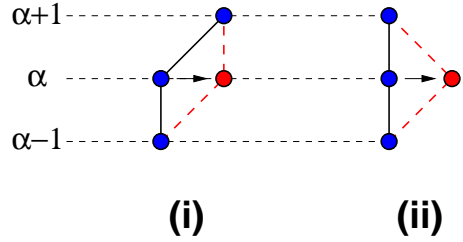}
\caption{\small The two cases (i) and (ii) of forward mutations $\mu_\al^+$ that allow to reach 
by iteration any Motzkin path, up to global translation, from the null path $\bm_0$.}\label{fig:twocases}
\end{figure}

The expression \eqref{contifrazer} has manifestly positive polynomial coefficients 
of the $y$'s in its series expansion in $t$, and the Theorem \ref{positQ} is proved for $R_{1,n}$
and the initial data $\bx_{\bm_0}$. To reach all the Motzkin paths in the fundamental
domain $\cM_r$, we proceed by induction under (forward) mutation. We note
that all the Motzkin paths in $\cM_r$ may be obtained from $\bm_0$ by iterated
application of mutations $\mu_\al^+:\bm \mapsto \bm'$, restricted 
only to the two following situations depicted in Fig.\ref{fig:twocases}
(where by convention we must simply omit $m_0$ and $m_{r+1}$):
\begin{itemize}
\item{\bf Case (i):} $m_{\al-1}=m_{\al}=m_{\al+1}-1$, then $m'_{\al-1}=m'_{\al}-1=m'_{\al+1}-1$.
\item{\bf Case (ii):}$m_{\al-1}=m_{\al}=m_{\al+1}$, then $m'_{\al-1}=m'_{\al}-1=m'_{\al+1}$.
\end{itemize}

In Ref.\cite{DFK3}, it was found that the mutations
may be implemented iteratively on the continued fraction $F_{\bm_0}(t)$ \eqref{contifrazer},
to reach any $F_\bm(t)=\sum_{n\geq 0} t^n R_{1,n+m_1}/R_{1,m_1}$, $\bm\in \cM_r$,
by application of the Lemmas \ref{reartone} and \ref{reartwo}, according to the case 
$\mu_\al^+:\bm\mapsto \bm'$ at hand:
\begin{itemize}
\item{$\al>1$:}  Apply Lemma \ref{reartwo} to $F_\bm(t)\to F_{\bm'}(t)$. 
\item{$\al=1$:} Apply first Lemma \ref{reartone} to obtain
$F_{\bm}(t)=1+t y_1(\bm) G_{\bm}(t)$, and then Lemma \ref{reartwo} to $G_{\bm}(t)\to F_{\bm'}(t)$.
\end{itemize}

Each continued fraction thus obtained is manifestly positive, 
as the two rearrangements are manifestly positive. Note that under the rearrangement
Lemma \ref{reartwo} the {\it structure} of the continued fraction changes, and in general
we obtain multiply branching continued fractions. Remarkably however, 
the continued fraction $F_{\bm}(t)$ may be expressed entirely in terms of a collection of weights
$\by(\bm)=(y_1(\bm),...,y_{2r+1}(\bm))$ called {\it skeleton weights}.
Alternatively, the continued fraction expressions allow to interpret the ratio
$R_{1,n+m_1}/R_{1,m_1}$ as the partition function for weighted paths 
on a rooted target graph $\Gamma_\bm$, from and to the root, and 
with exactly $n$ {\it descents}, i.e. steps taken towards the root. A compactified version
of these graphs was found recently in \cite{NewRKPDF}, and allows to put all the continued fractions $F_\bm$
in the form of Jacobi-type continued fractions, namely of the form
$$J(\bx)=\cfrac{1}{1-x_1 -\cfrac{x_2}{1- x_3 - \cfrac{x_4}{\ddots \cfrac{{} }{
1-x_{2r-1}- \cfrac{x_{2r}}{1- x_{2r+1}}}}}} $$
and we have
\begin{equation} \label{Jfrac}
F_\bm(t)=1+ty_1(\bm)J\big({\hat y}_1(\bm),...,{\hat y}_{2r+1}(\bm)\big)
\end{equation}
where for $\al=2,3,...,r$:
\begin{eqnarray}
&&\left\{ \begin{matrix}  
{\hat y}_{2\al-1}=ty_{2\al-1} \hfill & {\hat y}_{2\al}=ty_{2\al}  \hfill & 
{\rm if}\ m_{\al+1}=m_{\al} \hfill \\ 
{\hat y}_{2\al-1}=t(y_{2\al-1}+y_{2\al})  \hfill & {\hat y}_{2\al}=t^2y_{2\al}y_{2\al+1}  \hfill & 
{\rm if}\ m_{\al+1}=m_{\al}+1 \hfill \\
{\hat y}_{2\al-1}=ty_{2\al-1}-{y_{2\al}\over y_{2\al+1}}  \hfill & {\hat y}_{2\al}={y_{2\al}\over y_{2\al+1}}  \hfill 
& {\rm if}\ m_{\al+1}=m_{\al}-1  \hfill 
\end{matrix} \right.\nonumber \\
{\rm and} &&{\hat y}_{2r+1}=ty_{2r+1} \nonumber 
\end{eqnarray}
We see that the price to pay for having a nice Jacobi-type continued fraction structure for $F_\bm$ is that it is
no longer manifestly positive, due to the subtractions in the definition of $\hat y$. However, it is
a straightforward exercise to rearrange \eqref{Jfrac} into a manifestly positive, possibly multiply
branching continued fraction, in which all the coefficients are Laurent monomials of the $y_i(\bm)$'s with
coefficient $t$ (we call this form {\it canonical}). We
simply apply iteratively from the bottom to the top of the Jacobi-like continued fraction 
the Lemma \ref{reartone} in the form
\begin{equation}\label{flat}
a+b+\cfrac{bc}{1-c-u}=a+\cfrac{b}{1-\cfrac{c}{ 1-u}}
\end{equation}
or its ``converse":
\begin{lemma}\label{rearthree}
$$a-{b\over c} +\cfrac{{b\over c}}{1-c-u}=a+ \cfrac{b+{b\over c}u}{1-c-u}$$
\end{lemma}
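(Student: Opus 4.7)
The plan is to verify this identity by direct algebraic manipulation; there is no hidden structure to uncover, since Lemma \ref{rearthree} is a purely formal rearrangement in the field of rational functions in $a,b,c,u$ that will be used (as its companion Lemma \ref{reartone}) to massage continued fractions. So I would simply bring everything on the left-hand side over the common denominator $1-c-u$ and compare with the right-hand side.

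Concretely, I would first isolate the $a$ on both sides and reduce to showing
\begin{equation*}
-\frac{b}{c}+\cfrac{b/c}{1-c-u}=\cfrac{b+(b/c)u}{1-c-u}.
\end{equation*}
Then I would combine the two terms on the left over the denominator $1-c-u$, obtaining in the numerator
\begin{equation*}
\frac{b}{c}-\frac{b}{c}(1-c-u)=\frac{b}{c}\bigl(1-(1-c-u)\bigr)=\frac{b}{c}(c+u)=b+\frac{b}{c}u,
\end{equation*}
which matches the numerator on the right. This finishes the verification.

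There is really no obstacle here: the lemma is a one-line cross-multiplication, to be used later as the positivity-preserving ``inverse'' of the basic rearrangement \eqref{flat}. The only point worth highlighting in the write-up is the sign bookkeeping, namely that the apparently negative term $-b/c$ on the left is exactly cancelled by the $b/c$ produced in expanding $(b/c)/(1-c-u)$ as a geometric series in $c+u$, leaving the manifestly ``combined'' form on the right. This is the feature that will make Lemma \ref{rearthree} useful in turning a Jacobi-type continued fraction with subtractions (as arising in \eqref{Jfrac}) into a canonical, multiply-branching continued fraction with only Laurent-monomial coefficients in the skeleton weights $y_i(\bm)$.
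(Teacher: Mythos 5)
Your verification is correct and coincides with the paper's (implicit) proof: the paper presents this lemma, like its companions, as something "easily proved by direct calculation," which is exactly the one-line cross-multiplication over the denominator $1-c-u$ that you carry out. Nothing is missing.
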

Note that multiple branching will occur when applying Lemma \ref{rearthree}
because the remaining part $u$ of the continued fraction 
is now duplicated. Moreover, the iteration of the procedure will produce new weights,
which were called ``redundant weights" in \cite{DFK3}, 
that are all Laurent monomials of the skeleton weights.
Note finally that we must apply \eqref{flat} at ``level zero" to $F_\bm(t)$ itself, with $a=0$, $b=1$,
$c=y_1(\bm)$ (i.e. Lemma \ref{reartone}).
The manifestly positive canonical form of $F_\bm(t)$ can be directly interpreted \cite{DFK3}
as the partition function  of 
paths on a rooted target graph $\Gamma_\bm$, with step weights that are
either trivial (with value $1$) or 
Laurent monomials of the $y_i(\bm)$'s with coefficient $t$.

\begin{example}
The fraction corresponding to the Motzkin path $(2,1,0)$ for $r=3$ reads:
\begin{eqnarray*}
F_{2,1,0}(t)&=&1+t\cfrac{y_1}{1-ty_1+{y_2\over y_3} -\cfrac{{y_2\over y_3}}{1-ty_3
+{y_4\over y_5}-\cfrac{{y_4\over y_5}}{1-t y_5 -t \cfrac{y_6}{ 1-t y_7}}}}\\
&=&\cfrac{1}{1-t \cfrac{y_1}{1-t \cfrac{y_2 +\cfrac{{y_2y_4\over y_3}+\cfrac{{y_2y_4y_6\over y_3y_5}}{ 1-ty_7}
}{1-t y_5-t\cfrac{y_6}{ 1-t y_7}}}{
1-t y_3-t \cfrac{y_4+\cfrac{{y_4y_6\over y_5}}{1-t y_7}}{ 1-t y_5-t\cfrac{y_6}{ 1-t y_7}}}}}
\end{eqnarray*}
The first expression is the Jacobi-like form.
The second expression is the canonical form, obtained by applying Lemma \ref{rearthree} first on the third
level of the continued fraction, and then on the second, 
and by finally applying the Lemma \ref{reartone}.
It is manifestly positive, and introduces ``redundant" weights $ty_2y_4/y_3$, $ty_2y_4y_6/(y_3y_5)$,
$ty_4y_6/y_5$, all of which are Laurent monomials of the skeleton weights $(y_1,...,y_7)$
with coefficient $t$.
\end{example}
\begin{example}
The fraction corresponding to the Motzkin path $(0,1,0)$ for $r=3$ reads:
\begin{eqnarray*}
F_{0,1,0}(t)&=&1+t\cfrac{y_1}{ 1-t(y_1+y_2)-t^2 \cfrac{{y_2 y_3}}{ 1-ty_3
+{y_4\over y_5}-\cfrac{{y_4\over y_5}}{ 1-t y_5 -t \cfrac{y_6}{ 1-t y_7}}}}\\
&=&\cfrac{1}{1-t \cfrac{y_1}{1-t \cfrac{y_2}{ 
1-t \cfrac{y_3}{1-t \cfrac{y_4+\cfrac{{y_4y_6\over y_5}}{ 1-t y_7}}{ 1-t y_5-t\cfrac{y_6}{ 1-t y_7}}}}}}
\end{eqnarray*}
We have applied the Lemma \ref{rearthree} to the third stage of the initial continued fraction,
and then Lemma \ref{reartone} to its second and first stages. We get a redundant weight $ty_4y_6/y_5$.
\end{example}
Finally, we get the following explicit expression for the skeleton weights 
in terms of the initial data $\bx_\bm$, for any $\bm\in \cM_r$:

\begin{thm}
The skeleton weights associated to the motzkin path $\bm\in\cM_r$ read:
\begin{eqnarray}
y_{2\al-1}(\bm)&=&{R_{\al,m_\al+1}R_{\al-1,m_{\al-1}}
\over R_{\al,m_\al}R_{\al-1,m_{\al-1}+1}} \label{oddync} \\
y_{2\al}(\bm)&=&\left\{  \begin{matrix} 
{R_{\al+1,m_{\al+1}+1}R_{\al+1,m_{\al+1}-1}\over R_{\al+1,m_{\al+1}}^2} 
& {\rm if} \ m_\al=m_{\al+1}+1\\
{1} & {\rm otherwise} 
\end{matrix}\right\}  \nonumber \\
&&\qquad \times {R_{\al+1,m_\al+1}R_{\al-1,m_\al}\over R_{\al,m_\al+1}R_{\al,m_\al}} \nonumber \\
&&\qquad \times
\left\{ \begin{matrix} 
{R_{\al-1,m_{\al-1}}^2\over R_{\al-1,m_{\al-1}-1}R_{\al-1,m_{\al-1}+1} } & {\rm if} \ m_\al=m_{\al-1}-1\\
{1} & {\rm otherwise} 
\end{matrix}\right\}\label{evenync} 
\end{eqnarray}
\end{thm}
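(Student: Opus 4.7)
The plan is to proceed by induction on the mutation distance from the null Motzkin path $\bm_0$ inside the fundamental domain $\cM_r$. For the base case $\bm=\bm_0$ all components $m_\al$ coincide, so none of the local conditions $m_\al=m_{\al\pm1}\pm 1$ appearing in \eqref{evenync} is satisfied: each brace-factor collapses to $1$ and the formulas reduce precisely to the weights \eqref{ymzer} already established as skeleton weights for $F_{\bm_0}(t)$.

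For the inductive step I fix a forward mutation $\mu_\al^+:\bm\mapsto\bm'$ of the two types shown in Fig.\ref{fig:twocases}. Applying Lemma \ref{reartwo} (preceded by Lemma \ref{reartone} when $\al=1$) to the $\al$-th stage of the continued fraction $F_\bm(t)$ produces $F_{\bm'}(t)$ together with explicit rational expressions for the new skeleton weights $y_i(\bm')$ in terms of $\by(\bm)$; only the weights with indices in $\{2\al-2,2\al-1,2\al,2\al+1,2\al+2\}$ are altered, so the verification is genuinely local. Substituting the inductive hypothesis \eqref{oddync}--\eqref{evenync} for the $y_i(\bm)$ reduces the claim to an identity between Laurent monomials in $\bx_\bm$.

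The translation between $\bx_\bm$ and $\bx_{\bm'}$ is governed by the single $Q$-system relation
\[R_{\al,m_\al+2}\, R_{\al,m_\al}=R_{\al,m_\al+1}^{2}+R_{\al+1,m_\al+1}\,R_{\al-1,m_\al+1},\]
which expresses the updated variable $R_{\al,m_\al+2}$ entering $\bx_{\bm'}$ in terms of the old data. Eliminating (or introducing) $R_{\al,m_\al+2}$ via this identity in the transformed weights, and simplifying, should reproduce on the nose the formulas \eqref{oddync}--\eqref{evenync} evaluated at $\bm'$. Since iterated cases (i) and (ii) connect $\bm_0$ to any Motzkin path in $\cM_r$, the formulas then hold throughout $\cM_r$.

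The main obstacle is the case distinction hidden in the brace-factors of \eqref{evenync}: the presence of the correction $R_{\al+1,m_{\al+1}+1}R_{\al+1,m_{\al+1}-1}/R_{\al+1,m_{\al+1}}^{2}$ or its $\al-1$ analogue depends on whether $m_\al$ differs from $m_{\al\pm 1}$ by $\pm 1$, and a mutation $\mu_\al^+$ (which shifts $m_\al$ by $+2$) can toggle these local conditions not only for $y_{2\al}(\bm)$ itself but simultaneously for the two neighbours $y_{2\al-2}(\bm)$ and $y_{2\al+2}(\bm)$. Consequently each of cases (i), (ii) must be refined into sub-cases according to the Motzkin profile around $\al$, and in each sub-case one must check that the appearance or disappearance of a brace-factor is matched exactly by the new squared/cubic ratio of $R$'s produced by the $Q$-system relation. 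Once this bookkeeping is carried out carefully, the induction closes and the theorem follows.
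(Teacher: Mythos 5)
Your proposal is correct and follows essentially the same route as the paper: induction on forward mutations restricted to cases (i) and (ii), with base case given by \eqref{ymzer}, the local transformation of skeleton weights extracted from Lemma \ref{reartwo} (and Lemma \ref{reartone} when $\al=1$), and the verification that the closed formulas satisfy these transformation rules as a consequence of the $Q$-system relation at $k=m_\al+1$. Your closing remark about the sub-case bookkeeping forced by the brace-factors is precisely the content the paper compresses into ``it is straightforward to check.''
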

\begin{proof}
This is proved by induction on mutations. First, the formulas (\ref{oddync}-\ref{evenync})
hold for the fundamental case $\bm=\bm_0$ (see \eqref{ymzer}). Assume they hold
for some Motzkin path $\bm$.
As before, we only have to examine the cases (i) and (ii).
As a consequence of Lemma \ref{reartwo}, the mutation 
$\mu_\al:\by\equiv \by(\bm)\mapsto \by'\equiv\by(\bm')$ 
acts on the skeleton weights as follows:
\begin{eqnarray}
{\rm \bf Case}\  (i)\ :&& \left\{ \begin{matrix} 
y_{2\al-1}' & = & y_{2\al-1}+y_{2\al} \\
y_{2\al}' & = & y_{2\al+1}y_{2\al}/(y_{2\al-1}+y_{2\al}) \\
y_{2\al+1}' & = & y_{2\al+1}y_{2\al-1}/(y_{2\al-1}+y_{2\al})
\end{matrix} \right. \label{caseone}\\
{\rm \bf Case}\  (ii)\ :&&\left\{ \begin{matrix} 
y_{2\al-1}' & = & y_{2\al-1}+y_{2\al} \\
y_{2\al}' & = &  y_{2\al+1}y_{2\al}/(y_{2\al-1}+y_{2\al})  \\
y_{2\al+1}' & = & y_{2\al+1}y_{2\al-1}/(y_{2\al-1}+y_{2\al})\\
y_{2\al+2}' & = & y_{2\al+2}y_{2\al-1}/(y_{2\al-1}+y_{2\al})
\end{matrix} \right. \label{casetwo}
\end{eqnarray}
and $y'_\beta=y_\beta$ in all the other cases.
Finally, it is straightforward to check that
these relations are satisfied by the weights (\ref{oddync}-\ref{evenync}), 
as a consequence of the $Q$-system. The Theorem follows.
\end{proof}

Let us now consider $F_\bm(t)$ in its manifestly positive canonical form, with possible redundant weights,
which are all monomials of the skeleton weights, with coefficient $t$. 
The series expansion in $t$ has coefficients
that are polynomials of the skeleton and redundant weights, hence $R_{1,n+m_1}$
is itself a positive Laurent polynomial of the initial data $\bx_\bm$.
This completes the proof of Theorem \ref{positQ} for $\al=1$. The case of $\al>1$ 
follows from a slight generalization of the Lindstr\"om-Gessel-Viennot theorem \cite{LGV1}\cite{LGV2},
which allows to interpret the discrete Wronskian expression \eqref{discwron} for 
$R_{\al,n}=W_{\al,n}$ as the partition function for families of strongly non-intersecting paths on
the same target graph $\Gamma_\bm$, with the same step weights (see Ref.\cite{DFK3} for details).
The Theorem \ref{positQ} follows.

\section{$A_r$ $T$-system}\label{tsysec}

In this section, we present the solution of the $A_r$ $T$-system
when the boundary stepped surface is periodic.
It appears to be a non-commutative generalization of the $Q$-system case, and to have the Laurent positivity property. 
We actually have the following more general result, proved in \cite{DFT}:

\begin{thm}\label{posiTall}
For any stepped surface $\bk=(k_{\al,j})_{\al\in I_r;j\in\Z}$, 
the solution $T_{\al,j,k}$ of the $A_r$ $T$-system is a positive Laurent 
polynomial of the initial data $\bx_\bk$.
\end{thm}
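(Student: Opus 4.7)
The plan is to adapt the continued-fraction/network strategy of Theorem \ref{positQ} to arbitrary stepped surfaces, but with the one-dimensional continued fraction replaced by a genuinely two-dimensional network. The three-step program from the introduction (conserved quantities, positive generating function, mutation rearrangement) will be replaced by: (1) a base-case network for the fundamental surface $\bk_0$; (2) a partition-function representation of $T_{\al,j,k}$ as a sum over non-intersecting families of paths on a network $N_\bk$ whose edge weights are Laurent monomials in $\bx_\bk$ with non-negative integer coefficients; and (3) a proof that every elementary mutation $\mu_{\al,j}^{\pm}$ acts on $N_\bk$ by a purely local, manifestly positive move.

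For the base case $\bk=\bk_0$, the surface is translationally invariant with period $2$ in $j$, so the $j$-fiber of the network reduces to the rooted graph $\Gamma_{\bm_0}$ of Remark \ref{rempath}. Theorem \ref{positQ}, together with the Lindstr\"om--Gessel--Viennot extension to $\al>1$ via the discrete Wronskian \eqref{discwron}, already provides a positive Laurent expression for every $T_{\al,j,k}$ in terms of $\bx_{\bk_0}$. Glue these fibers along $j$ to obtain a network $N_{\bk_0}$ on the strip $I_r\times\Z$ carrying positive Laurent weights.

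Next I would run an induction on the mutation distance to $\bk_0$. Suppose $N_\bk$ already represents every $T_{\al,j,k}$ as a non-intersecting-path partition function with positive Laurent weights, and apply a mutation $\mu_{\al,j}^{\pm}$. Since only the single variable $T_{\al,j,k_{\al,j}}$ is updated and the update uses the $T$-system relation \eqref{Tsys}, the effect on $N_\bk$ is confined to a finite local patch around the site $(\al,j)$. On that patch I would exhibit a two-dimensional analogue of the rearrangement Lemmas \ref{reartone}, \ref{reartwo}, and \ref{rearthree}: edges are reweighted by positive Laurent monomials in $\bx_{\bk'}$, possibly duplicated to account for ``redundant weights'' of the kind already encountered in the $Q$-system case. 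The LGV interpretation then transports automatically from $N_\bk$ to $N_{\bk'}$, so the positive Laurent property propagates. Because any stepped surface is reachable from $\bk_0$ by iterated admissible mutations, this closes the induction and the theorem follows.

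The main obstacle is the third step. Unlike the $Q$-system case, the surface is not of bounded $j$-extent, so one cannot simply collapse the network to a one-dimensional continued fraction and apply a small number of algebraic identities. One must instead isolate, for each of the two mutation directions and each local configuration of $\bk$ around $(\al,j)$, a genuinely two-dimensional local move on $N_\bk$ and verify by direct computation that this move implements the substitution dictated by \eqref{Tsys} while keeping all new weights positive Laurent in $\bx_{\bk'}$. A related difficulty is that the same lattice site can be acted on by infinitely many mutations as one varies $\bk$, so the local moves must be compatible with later iterations; in particular one must check that the class of ``networks with positive Laurent weights plus redundant-weight duplications'' is closed under these moves. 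Once the dictionary between mutations and local network rearrangements is established, the base case and the LGV interpretation handle the rest.
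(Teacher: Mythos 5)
Your overall strategy (a network $N_\bk$ attached to each stepped surface, non-intersecting paths, local positive moves under elementary mutations) is in fact the route taken in \cite{DFT}, which is where the paper says the full statement is proved; within these notes the paper only establishes the restricted case of stepped surfaces that are $2$-periodic in $j$. For that restricted case the paper's mechanism is quite different from yours: it never touches individual elementary mutations $\mu_{\al,j}^{\pm}$, but works exclusively with the \emph{compound} mutations $\mu_\al^{\pm}=\prod_{j}\mu_{\al,j}^{\pm}$, recasts the $T$-system as the operator-valued $Q$-system \eqref{ncQsysone}, writes the generating function as an operator continued fraction with weights $\by_\al$ acting on the basis $|j\rangle$, and applies the non-commutative rearrangement Lemmas \ref{ncreartone} and \ref{ncreartwo}. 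That machinery is intrinsically one-dimensional in the $\al$-direction and cannot resolve a single mutation at one site $(\al,j)$, which is exactly why the paper stops at periodic surfaces and defers the general case.

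The gap in your proposal is that its central ingredient --- the ``two-dimensional analogue of Lemmas \ref{reartone}--\ref{rearthree}'', i.e.\ the local, manifestly positive network move implementing one elementary mutation $\mu_{\al,j}^{\pm}$ --- is named but never exhibited. Everything else (base case, Lindstr\"om--Gessel--Viennot, induction on mutations) is routine once that move exists, so as written the proposal reduces the theorem to precisely the statement that still needs proving. Two further points need repair if you pursue this route: first, the base case cannot be quoted from Theorem \ref{positQ}, which concerns the scalar $Q$-system; you need the operator, time-dependent-weight version of Section \ref{tsysec} (or a direct network construction for $\bk_0$). Second, a general stepped surface differs from $\bk_0$ at infinitely many sites, so induction on mutation distance only reaches surfaces obtained from $\bk_0$ by finitely many elementary mutations; you must add the (true, but unstated) observation that a fixed $T_{\al,j,k}$ depends only on the initial data in a finite region of the surface, so that modifications outside that region are irrelevant and the induction suffices.
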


\subsection{Periodic stepped surfaces}

In this section, we restrict ourselves to initial data $\bx_\bk$, with periodic stepped surfaces
$\bk$ with $k_{\al,j+2}=k_{\al,j}$ for all $\al,j$. 
An example is provided by the fundamental stepped
surface $\bk_0$.
These periodic stepped surfaces are related via compound mutations
of the form $\mu_\al^{\pm}=\prod_{j\in \Z} \mu_{\al,j}^{\pm}$, 
namely infinite products of elementary mutations,
which are implemented by the use of the $T$-system relations for a fixed 
$\al$ but for all $j\in\Z$ simultaneously, in order to simultaneously substitute 
$T_{\al,j,k_{\al,j\,{\rm mod}\, 2}}\to T_{\al,j,k_{\al,j\,{\rm mod}\, 2}\pm 2}$ for all $j\in\Z$.

The periodicity property allows us to work within the
framework of the non-commutative $Q$-system formulation of the $T$-system
(\ref{actrt}-\ref{ncQsysone}). The compound mutation $\mu_\al^{\pm}$ indeed 
amounts to a substitution $\bR_{\al,k}\to \bR_{\al,k\pm 2}$ by use of the
relation \eqref{ncQsysone}. 

Note that the operators $\bR_{\al,k}$ are only ``mildly" non-commuting, as
all the operators of the form $\bd^{-2\al-2k-m} \bR_{\al,k} \bd^m$ act diagonally on the
basis $|j\rangle_{j\in\Z}$, hence they all commute, for all $\al,k,m$.

\subsection{Non-commutative approach}

In Ref. \cite{DFK09a}, the constructions of the $Q$-system case are generalized
to the case of the $T$-system (conserved quantities, linear recursion relations, partition functions
for hard objects, and finally path formulation). 
In particular, the Desnanot-Jacobi identity may still be applied
to eliminate the $T_{\al,j,k}$ for $\al>1$ in terms
of the $T_{1,j,k}$'s via the following ``discrete Wronskian" formulas:
\begin{equation}\label{wronT}
T_{\al,j,k}=\det_{1\leq a,b\leq \al} \big( T_{1,j-a+b,k-a-b+\al+1} \big) 
\end{equation}
We may reformulate the final result for the solution $T_{1,j,k}$
of the $A_r$ $T$-system as a function of the fundamental initial data 
$\bx_{\bk_0}$ as follows. The generating function 
$\bF_0(t)=\sum_{k\geq 0} t^k \bR_{1,k} \bR_{1,0}^{-1}$ has a continued fraction
form analog to \eqref{contifrazer}.  Define the operators $\by_\al$, $\al=1,2,...,2r+1$
by:
\begin{eqnarray*}
\by_{2\al-1} |t\rangle &=& \ \
{T_{\al,t+\al-1,k}T_{\al-1,t+\al,0}\over T_{\al,t+\al-1}T_{\al-1,t+\al,1}} \ \ |t-2\rangle\\
\by_{2\al} |t\rangle &=& 
{T_{\al+1,t+\al,1}T_{\al-1,t+\al+1,0}\over T_{\al,t+\al,1}T_{\al,t+\al,0}} |t-2\rangle
\end{eqnarray*}
then we have:
\begin{eqnarray}
\bF_0(t)&=& \label{nconTi} \\
\left( I -t\big( I-t \big( I-t \by_3- \right. \!\!\!\!\!\!\!\!\!\!  &t&  \!\!\!\!\!\!\!\!\!\!\!\!  \left. \big(
\cdots (I-t\by_{2r-1}-t (I-t\by_{2r+1})^{-1}\by_{2r} )^{-1}\!\! 
\cdots\big)^{-1}\by_4 \big)^{-1}\by_2\big)^{-1}\by_1 \right)^{-1}\nonumber
\end{eqnarray}
where $I$ denotes the identity operator $I |j\rangle=|j\rangle$. 

Picking the coefficient of $|j-k-1\rangle$ in $\bF_0(t)|j+k-1\rangle$, 
this may be interpreted as a direct generalization 
of the $Q$-system result: $T_{1,j,k}/T_{1,j+k,0}$ is the partition function for 
paths on the same graph $\Gamma_{\bm_0}$ as in the commuting case
(see Remark \ref{rempath}), but now with 
``time"-dependent weights $y_\al(t)$ associated to the operators $\by_\al |t\rangle =y_\al(t)|t-2\rangle$,
such that the paths start at time $t_0=j-k-1$ and end at time $t_1=j+k-1$, 
and that the ``up" steps $+1$ take no time (i.e. do not increase the time counter) and have weight $1$, 
while the down steps $-1$ and level steps $0$ take two units of time 
(i.e. increase by $2$ the time counter) and receive the weights $y_i(t)$,
where $t$ is the value of the time counter after the step is completed. 
This shows explicit Laurent positivity for $T_{1,j,k}$ as a function of $\bx_{\bk_0}$.
(To make contact with \cite{DFK09a}, we have to introduce conjugated weights 
$\Y_\al=\bd^{-[{\al\over 2}]-1}\by_\al \bd^{[{\al\over 2}]+1}$ for $\al\geq 2$ and 
$\Y_1=\bd \by_1 \bd^{-1}$, with the effect of distributing the time more evenly on the steps,
namely steps $\pm1$ take one unit of time, while steps $0$ take two.)

Finally, let us show that, like in the commutative $Q$-system case, we may implement mutations by
non-commutative versions of the rearrangement Lemmas \ref{reartone} and \ref{reartwo}.
We have the following two non-commutative rearrangement of non-commutative fractions,
which preserve manifest positivity. 
(In the following, and for later generalizations,
we assume the continued fraction coefficients are elements of an algebra $\cA$ with unit $\bf 1$;
moreover, any expression of the form $({\bf 1}-\ba)^{-1}$
may be formally expanded as $\sum_{n\geq 0} \ba^n$.)

\begin{lemma}\label{ncreartone}
Let $\ba,\bb\in \cA$, we have:
$${\bf 1}+({\bf 1}-\ba-\bb)^{-1}\ba =\big({\bf 1}- ({\bf 1}-\bb)^{-1}\ba \big)^{-1} $$
\end{lemma}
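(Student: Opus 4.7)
The plan is to reduce both sides to a single variable by extracting the common quantity $\mathbf{c} := (\mathbf{1}-\mathbf{b})^{-1}\mathbf{a}$. The key observation is the factorization
\[
\mathbf{1} - \mathbf{a} - \mathbf{b} = (\mathbf{1} - \mathbf{b})\bigl(\mathbf{1} - (\mathbf{1}-\mathbf{b})^{-1}\mathbf{a}\bigr),
\]
which is immediate by expanding the right-hand side. Inverting (order matters since $\mathcal{A}$ is non-commutative) yields
\[
(\mathbf{1} - \mathbf{a} - \mathbf{b})^{-1} = \bigl(\mathbf{1} - (\mathbf{1}-\mathbf{b})^{-1}\mathbf{a}\bigr)^{-1}(\mathbf{1}-\mathbf{b})^{-1}.
\]

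Right-multiplying by $\mathbf{a}$ gives $(\mathbf{1}-\mathbf{a}-\mathbf{b})^{-1}\mathbf{a} = (\mathbf{1}-\mathbf{c})^{-1}\mathbf{c}$, so the left-hand side of the lemma becomes $\mathbf{1} + (\mathbf{1}-\mathbf{c})^{-1}\mathbf{c}$. Then I would invoke the elementary non-commutative identity
\[
\mathbf{1} + (\mathbf{1}-\mathbf{c})^{-1}\mathbf{c} = (\mathbf{1}-\mathbf{c})^{-1}\bigl[(\mathbf{1}-\mathbf{c}) + \mathbf{c}\bigr] = (\mathbf{1}-\mathbf{c})^{-1},
\]
which is exactly the right-hand side $\bigl(\mathbf{1} - (\mathbf{1}-\mathbf{b})^{-1}\mathbf{a}\bigr)^{-1}$.

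There is essentially no obstacle here: the only subtlety worth stressing is the \emph{order} of factors in the factorization step (we pull $\mathbf{1}-\mathbf{b}$ out on the \emph{left}, which is precisely what makes $(\mathbf{1}-\mathbf{b})^{-1}\mathbf{a}$ — rather than $\mathbf{a}(\mathbf{1}-\mathbf{b})^{-1}$ — appear in the statement). If one prefers a fully formal derivation, the same identity can alternatively be established by expanding both sides as formal power series in $\mathbf{a}$ and $\mathbf{b}$ via the geometric series $(\mathbf{1}-\mathbf{x})^{-1} = \sum_{n\ge 0}\mathbf{x}^n$, and checking that the word $\mathbf{b}^{i_1}\mathbf{a}\,\mathbf{b}^{i_2}\mathbf{a}\cdots \mathbf{b}^{i_k}\mathbf{a}$ appears with coefficient $\mathbf{1}$ on each side; this confirms that no hidden commutation is being used.
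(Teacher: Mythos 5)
Your proof is correct, and it is essentially the direct algebraic verification that the paper has in mind (it states the commutative analogue, Lemma \ref{reartone}, as ``easily proved by direct calculation'' and gives no separate argument for the non-commutative version). The factorization $\mathbf{1}-\ba-\bb=(\mathbf{1}-\bb)\big(\mathbf{1}-(\mathbf{1}-\bb)^{-1}\ba\big)$, the reversal of order upon inversion, and the final collapse $\mathbf{1}+(\mathbf{1}-\bc)^{-1}\bc=(\mathbf{1}-\bc)^{-1}$ are all valid in a unital algebra, and your word-counting check confirms no hidden commutation is used.
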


\begin{lemma}\label{ncreartwo}
For $\ba,\bb,\bc,\bu\in \cA$, $\ba+\bb$ invertible, we have:
$$\ba+\big({\bf 1}-({\bf 1}-\bu)^{-1} \bc\big)^{-1}\bb 
=\big({\bf 1}-({\bf 1}-\bc'-\bu)^{-1}\bb'\big)^{-1}\ba'$$
provided
$$\ba'=\ba+\bb\qquad \bb'=\bc \bb (\ba+\bb)^{-1} \qquad \bc'=\bc \ba (\ba+\bb)^{-1}$$
\end{lemma}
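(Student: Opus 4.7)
The plan is to bring both sides of the claimed identity into the common canonical form $(\I-\bu-\bc)^{-1}\bigl[(\I-\bu)\ba'-\bc\ba\bigr]$ by eliminating the inner resolvents in turn. Before doing either computation, it pays to record two ``structural'' identities which, given the definitions of $\ba',\bb',\bc'$, follow from a one-line calculation: $\bc'+\bb'=\bc$ and $\bc'\ba'=\bc\ba$ (with the bonus $\bb'\ba'=\bc\bb$). The first follows from $\bc(\ba+\bb)(\ba+\bb)^{-1}=\bc$; the second from cancelling $(\ba+\bb)^{-1}(\ba+\bb)=\I$. These two relations are really the entire content of the lemma; everything else is bookkeeping about left/right placement of non-commutative factors.

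For the left-hand side, set $\mathbf{M}:=(\I-(\I-\bu)^{-1}\bc)^{-1}\bb$, so $\mathbf{M}=(\I-\bu)^{-1}\bc\,\mathbf{M}+\bb$. Multiplying on the left by $\I-\bu$ gives $(\I-\bu-\bc)\mathbf{M}=(\I-\bu)\bb$, hence $\mathbf{M}=(\I-\bu-\bc)^{-1}(\I-\bu)\bb$. Pulling out the common left factor,
\begin{align*}
\text{LHS} &= \ba+\mathbf{M} = (\I-\bu-\bc)^{-1}\bigl[(\I-\bu-\bc)\ba+(\I-\bu)\bb\bigr] \\
&= (\I-\bu-\bc)^{-1}\bigl[(\I-\bu)\ba'-\bc\ba\bigr].
\end{align*}

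For the right-hand side, set $\mathbf{N}:=(\I-(\I-\bc'-\bu)^{-1}\bb')^{-1}$, so $\mathbf{N}=(\I-\bc'-\bu)^{-1}\bb'\,\mathbf{N}+\I$. Multiplying on the left by $\I-\bc'-\bu$ and rearranging yields $(\I-\bc'-\bu-\bb')\mathbf{N}=\I-\bc'-\bu$. Invoking $\bc'+\bb'=\bc$, the left prefactor simplifies to $\I-\bc-\bu$, whence $\mathbf{N}=(\I-\bc-\bu)^{-1}(\I-\bc'-\bu)$ and therefore
\begin{align*}
\text{RHS} &= \mathbf{N}\ba' = (\I-\bc-\bu)^{-1}(\I-\bc'-\bu)\ba' \\
&= (\I-\bc-\bu)^{-1}\bigl[(\I-\bu)\ba'-\bc'\ba'\bigr] = (\I-\bc-\bu)^{-1}\bigl[(\I-\bu)\ba'-\bc\ba\bigr],
\end{align*}
the last equality using $\bc'\ba'=\bc\ba$. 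This matches the reduced LHS and proves the identity.

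The main obstacle is purely clerical rather than conceptual: one must scrupulously track whether each multiplication lands on the left or the right of the resolvents, since the roles of $\ba$ versus $\ba'$ and of $\bc$ versus $\bc'$ are no longer freely interchangeable as in the commutative Lemma \ref{reartwo}. The hypothesis that $\ba+\bb$ be invertible is used only to make sense of the definitions of $\bb'$ and $\bc'$; the inverses $(\I-\bu-\bc)^{-1}$ and $(\I-\bc-\bu)^{-1}$ themselves are understood as formal geometric series in $\cA$, as stipulated just before the statement.
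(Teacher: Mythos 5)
Your proof is correct, and it is essentially the argument the paper has in mind: the paper states this lemma without proof (describing its commutative counterpart as "easily proved by direct calculation"), and your direct verification via the two structural identities $\bc'+\bb'=\bc$ and $\bc'\ba'=\bc\ba$, reducing both sides to $({\bf 1}-\bu-\bc)^{-1}\big[({\bf 1}-\bu)\ba'-\bc\ba\big]$, is exactly that calculation carried out carefully. No gaps.
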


We note that the periodic stepped surfaces $\bk$ may be indexed by 
Motzkin paths of length $r-1$. Indeed, the map 
$\varphi: \bk\mapsto \bm$ defined by $m_\al=$Min$(k_{\al,0},k_{\al,1})$ 
is a bijection between the set of periodic stepped surfaces and that of Motzkin 
paths of length $r-1$. By a slight abuse of notation, we
will denote by $\bF_\bm(t)$ the function $\bF_\bk(t)$ with $\bm=\varphi(\bk)$.
Starting from the operator continued fraction $\bF_{\bm_0}(t)$ \eqref{nconTi}, 
we may now apply iteratively the
above rearrangement Lemmas \ref{ncreartone} and \ref{ncreartwo}, 
exactly in the same way as we applied Lemmas \ref{reartone} and \ref{reartwo} 
for the commutative $Q$-system.
The results are operator continued fractions $\bF_{\bm}(t)$, with operator coefficients,
whose matrix elements in the basis $|j\rangle_{j\in\Z}$ are Laurent monomials of the 
initial data $\bx_\bk$ with coefficient $t$. As the rearrangements preserve positivity manifestly, we deduce
that the Theorem \ref{posiTall} holds for $\al=1$ and all periodic stepped surfaces $\bk$.
Alternatively, the rearranged continued fractions $\bF_\bm(t)$ generate time-dependent weighted
paths on the {\it same} target graphs $\Gamma_\bm$ as in the commuting $Q$-system case.

For $\al>1$, we interpret the discrete Wronskian formulas \eqref{wronT} 
via the same generalization of the Lindstr\"om-Gessel-Viennot theorem, as computing
the partition function for families of strongly non-intersecting paths on $\Gamma_\bm$,
now with time-dependent weights. The positivity and therefore Theorem \ref{posiTall} follows, 
for the case where the stepped surface $\bk$
is 2-periodic in the $j$ direction.

\section{Non-commutative systems}\label{noncosec}

We have already seen how to interpret the $T$-system as a non-commutative $Q$-system,
in terms of variables obeying many special commutation relations.
In this section we investigate integrable relations involving 
fully non-commutative variables.

\subsection{The non-commutative $A_1$ $Q$-system}\label{ncaonesec}

We have the following:
\begin{thm}\label{ncqpositaone}
The solutions of the non-commutative $A_1$ $Q$-system 
\eqref{ncqsys} in terms of any admissible initial data $\bx_i=(\bR_i,\bR_{i+1})$
are Laurent polynomials with non-negative integer coefficients.
\end{thm}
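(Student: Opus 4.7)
The plan is to mirror the three-step strategy used for the commutative $A_1$ $Q$-system in Section \ref{warmupsec}, systematically replacing scalar identities with their non-commutative counterparts and invoking the non-commutative rearrangement Lemmas \ref{ncreartone} and \ref{ncreartwo} at the mutation step. First one reduces to proving positivity only for $\bR_n$ with $n\geq 0$ in terms of the fundamental initial data $\bx_0=(\bR_0,\bR_1)$: the recursion \eqref{ncqsys} is invariant under the shift $n\mapsto n+1$, and also under the anti-automorphism that sends $\bR_n\mapsto \bR_{1-n}$ while reversing the order of multiplication (which exchanges $\bR_{n+1}\bR_n^{-1}\bR_{n-1}$ with $\bR_{n-1}\bR_n^{-1}\bR_{n+1}$ before the shift). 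These two symmetries act as Laurent-monomial transformations on the initial data and preserve positivity, so that a positivity statement for $(\bx_0, n\geq 0)$ transports to all admissible $\bx_i$ and all $n\in\Z$.

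Second, one constructs a conserved operator and extracts a linear three-term recursion with operator coefficients. Right-multiplying \eqref{ncqsys} by $\bR_{n-1}^{-1}$ gives the useful telescoping-type identity
\[
\bR_{n+1}\bR_n^{-1}-\bR_n\bR_{n-1}^{-1}=\bR_n^{-1}\bR_{n-1}^{-1},
\]
and combining this with the analogous identity obtained by left-multiplication by $\bR_{n+1}^{-1}$ should produce a quantity $\bC$, a Laurent monomial in $\bR_0,\bR_1$, that is conserved modulo \eqref{ncqsys}, together with a linear recursion of the form $\bR_{n+1}=\bR_n\bC_1-\bR_{n-1}\bC_2$ (or a left-multiplicative variant). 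This is the non-commutative counterpart of \eqref{consone}--\eqref{linrecone}.

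Third, using this linear recursion one rewrites the generating function $\bF_0(t)=\sum_{n\geq 0} t^n\,\bR_n\bR_0^{-1}$ as a finite non-commutative continued fraction whose coefficients are Laurent monomials in $(\bR_0,\bR_1)$ appearing only with $+$ signs, in direct analogy with \eqref{contifone}. This proves the theorem for $\bx_0$. To pass to any other admissible initial data $\bx_i$, one applies iteratively Lemmas \ref{ncreartone} and \ref{ncreartwo} to rearrange the non-commutative continued fraction, exactly as Lemmas \ref{reartone} and \ref{reartwo} were applied in the commutative case. Since the rearrangements preserve manifest positivity, the resulting continued-fraction expression for $\bF_i(t)$ in terms of $\bx_i$ is manifestly positive, and Laurent positivity of $\bR_n$ follows by reading off the coefficient of $t^n$.

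The main obstacle is Step 2: the non-commutative setting forces a choice of ordering (left versus right multiplication) throughout the derivation of the conserved quantity and the linear recursion, and this choice must be compatible with the one-sided placement of coefficients required by Lemmas \ref{ncreartone} and \ref{ncreartwo}. In the commutative case this ordering is invisible, so the correct non-commutative analogue of the scalar invariant $c$ of \eqref{consone} must be identified with care, and one must check that the $\bC_1,\bC_2$ that appear in the linear recursion are indeed expressible as Laurent monomials of $(\bR_0,\bR_1)$ (not merely Laurent polynomials). Once this ordering is pinned down, Steps 3 and 4 become formal applications of the non-commutative rearrangements, and positivity follows.
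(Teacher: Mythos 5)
Your outline coincides step for step with the paper's strategy: reduce by the two anti-automorphisms to $n\geq 0$ and $\bx_0$, find conserved quantities and a one-sided linear recursion, write $\bF_0(t)=\sum_{n\geq 0}t^n\bR_n\bR_0^{-1}$ as a manifestly positive non-commutative continued fraction, and implement mutations via Lemmas \ref{ncreartone} and \ref{ncreartwo}. The reduction and mutation steps are fine as you describe them. The genuine gap is exactly where you flag ``the main obstacle'': Step 2 is not carried out, and the identity you start from does not by itself yield a conserved quantity. From $\bR_{n+1}\bR_n^{-1}=\bR_n\bR_{n-1}^{-1}+\bR_n^{-1}\bR_{n-1}^{-1}$ nothing telescopes, and one cannot simply add the ``reversed ratio'' as in \eqref{consone}, because the cross terms no longer cancel when the variables do not commute. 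The missing idea is that there are \emph{two} conserved quantities, the first of which measures the failure of consecutive $\bR$'s to commute: $\bC=\bR_{n+1}^{-1}\bR_n\bR_{n+1}\bR_n^{-1}$, whose conservation follows from the observation that $\bR_n$ commutes with the right-hand side $\bR_n+\bR_n^{-1}$ of \eqref{ncqsys}. This gives the quasi-commutation $\bR_n\bR_{n+1}=\bR_{n+1}\bC\bR_n$, which is then the tool that lets one show $\bK_n=(\bR_{n+1}+\bC\bR_{n-1})\bR_n^{-1}$ equals its other-sided version $\bR_n^{-1}(\bR_{n+1}\bC+\bR_{n-1})$ and is conserved. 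Without $\bC$ you get neither the conservation argument nor the one-sided recursion $\bR_{n+1}-\bK\bR_n+\bC\bR_{n-1}=0$.

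A second, related imprecision: you ask that the recursion coefficients $\bC_1,\bC_2$ be Laurent monomials, but the coefficient playing the role of $c$ in \eqref{linrecone} is $\bK$, which is a \emph{sum} of three monomials, not a monomial. What actually makes the continued fraction close up in the form analogous to \eqref{contifone} is the simultaneous decomposition $\bK=\by_1+\by_2+\by_3$ and factorization $\bC=\by_3\by_1$ with the \emph{same} ordered monomials $\by_1=\bR_1\bR_0^{-1}$, $\by_2=\bR_1^{-1}\bR_0^{-1}$, $\by_3=\bR_1^{-1}\bR_0$, whose one-sided placement is precisely what Lemmas \ref{ncreartone} and \ref{ncreartwo} require. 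Once $\bC$, $\bK$ and this factorization are established, the rest of your outline does reproduce the paper's proof.
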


To prove the Theorem,
we must somehow repeat the steps of Section \ref{warmupsec}. 
As in the commuting case, the symmetries of the equation allow
us to restrict to the case $n\geq 0$ and to the initial data $\bx_0$. 
We still define anti-automorphisms
$\sigma,\tau$ that leave $\bf 1$ invariant, and such that 
$\varphi(a b)=\varphi(b)\varphi(a)$, $\varphi=\sigma,\tau$, and
$\sigma(\bR_0)=\bR_1$, $\sigma(\bR_1)=\bR_2$, 
while $\tau(\bR_0)=\bR_1$ and $\tau(\bR_1)=\bR_0$. We deduce that 
$\sigma(\bR_n)=\bR_{n+1}$ and $\tau(\bR_n)=\bR_{1-n}$ for all $n\in\Z$.
Assuming we have a positive Laurent polynomial expression for
$\bR_n(\bx_0)$ for all $n\geq 0$, we may obtain a positive Laurent expression for $n\leq 0$ as
well by applying $\tau$. Finally from a positive Laurent polynomial expression for
$\bR_n(\bx_0)$ for all $n\in \Z$, we may obtain one for 
$\bR_n(\bx_i)$ for any $i\in\Z$, by applying $\sigma^i$.

We now show the
existence of conserved quantities:

\begin{lemma}
The equation \eqref{ncqsys} has the two following conserved quantities:
\begin{eqnarray}
\bC&=&\bR_{n+1}^{-1}\bR_n\bR_{n+1}\bR_n^{-1}\label{nconsone}\\
\bK&=&(\bR_{n+1}+\bC \bR_{n-1})\bR_n^{-1}=\bR_n^{-1}(\bR_{n+1}\bC+\bR_{n-1})\label{nconstwo}
\end{eqnarray}
\end{lemma}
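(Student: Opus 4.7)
The plan is to establish the two conservation laws one after the other, deriving everything from the non-commutative recursion \eqref{ncqsys} written in the equivalent form $\bR_n^2+\bone = \bR_{n+1}\bR_n^{-1}\bR_{n-1}\bR_n = \bR_n\bR_{n+1}\bR_n^{-1}\bR_{n-1}$ (obtained by right- and left-multiplying \eqref{ncqsys} by $\bR_n$).

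\emph{Step 1: $\bC$ is conserved.} Denote $\bC_n=\bR_{n+1}^{-1}\bR_n\bR_{n+1}\bR_n^{-1}$; I want $\bC_n=\bC_{n-1}$. From the observation above,
$$\bR_n\bR_{n+1}\bR_n^{-1}\bR_{n-1}=\bR_{n+1}\bR_n^{-1}\bR_{n-1}\bR_n.$$
Rewriting $\bR_n\bR_{n+1}\bR_n^{-1}=\bR_{n+1}\bC_n$ and $\bR_n^{-1}\bR_{n-1}\bR_n=\bC_{n-1}\bR_{n-1}$ (both by definition of $\bC_n$ and $\bC_{n-1}$) and canceling $\bR_{n+1}$ on the left and $\bR_{n-1}$ on the right, this identity collapses to $\bC_n=\bC_{n-1}$. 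Write $\bC$ for the common value.

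\emph{Step 2: The two forms of $\bK$ coincide.} The constancy of $\bC$ gives the conjugation identities $\bR_n\bR_{n+1}\bR_n^{-1}=\bR_{n+1}\bC$ and $\bR_{n-1}\bR_n=\bR_n\bC\bR_{n-1}$. Multiplying $(\bR_{n+1}+\bC\bR_{n-1})\bR_n^{-1}$ on the left by $\bR_n$ and simplifying via these two identities yields $\bR_{n+1}\bC+\bR_{n-1}$, proving $(\bR_{n+1}+\bC\bR_{n-1})\bR_n^{-1}=\bR_n^{-1}(\bR_{n+1}\bC+\bR_{n-1})$. The same identity $\bC\bR_{n-1}\bR_n^{-1}=\bR_n^{-1}\bR_{n-1}$ also simplifies the first form to
$$\bK_n=\bR_{n+1}\bR_n^{-1}+\bR_n^{-1}\bR_{n-1}.$$

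\emph{Step 3: $\bK$ is conserved.} Right-multiplying \eqref{ncqsys} by $\bR_{n-1}^{-1}$ gives $\bR_{n+1}\bR_n^{-1}=\bR_n\bR_{n-1}^{-1}+\bR_n^{-1}\bR_{n-1}^{-1}$, while left-multiplying the shifted recursion \eqref{ncqsys} at index $n-1$ by $\bR_n^{-1}$ gives $\bR_{n-1}^{-1}\bR_{n-2}=\bR_n^{-1}\bR_{n-1}+\bR_n^{-1}\bR_{n-1}^{-1}$. Substituting into
$$\bK_n-\bK_{n-1}=(\bR_{n+1}\bR_n^{-1}-\bR_n\bR_{n-1}^{-1})+(\bR_n^{-1}\bR_{n-1}-\bR_{n-1}^{-1}\bR_{n-2}),$$
the first parenthesis equals $+\bR_n^{-1}\bR_{n-1}^{-1}$ and the second equals $-\bR_n^{-1}\bR_{n-1}^{-1}$, so the sum vanishes.

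The only conceptually delicate point is Step 2: one must realize that the two a priori distinct presentations of $\bK$ are actually made equal by the constancy of $\bC$ obtained in Step 1, not by any independent fact. Once that is in hand, Steps 1 and 3 are short direct manipulations using only the recursion, rewritten respectively in its ``symmetric'' form $\bR_n^2+\bone=\bR_n\cdot\mathrm{LHS}=\mathrm{LHS}\cdot\bR_n$ and in its one-sided inverted forms.
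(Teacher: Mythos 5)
Your proof is correct. Steps 1 and 2 coincide with the paper's argument: the paper likewise derives $\bC_n=\bC_{n-1}$ from the fact that $\bR_n$ commutes with the right-hand side $\bR_n+\bR_n^{-1}$ (writing $\bR_n^2+\bone=\bR_{n+1}\bC_{n-1}\bR_{n-1}=\bR_{n+1}\bC_n\bR_{n-1}$), and proves the equality of the two presentations of $\bK$ by sandwiching with $\bR_n$ and invoking the quasi-commutation $\bR_n\bR_{n+1}=\bR_{n+1}\bC\bR_n$, exactly as you do. Only your Step 3 departs from the paper: there, conservation of $\bK$ is obtained by computing $\bR_{n+1}\bK_n\bR_n=\bR_{n+1}^2+\bR_n^2+\bone=\bR_{n+1}\bL_{n+1}\bR_n$ (using $\bR_{n+1}\bC\bR_{n-1}=\bR_n^2+\bone$ twice) and then concluding $\bK_n=\bL_{n+1}=\bK_{n+1}$ from $\bK=\bL$, whereas you first reduce $\bK_n$ to the two-term form $\bR_{n+1}\bR_n^{-1}+\bR_n^{-1}\bR_{n-1}$ and telescope using the one-sided inverted recursions. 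Both are short direct manipulations; your reduced form has the mild advantage of making the expression \eqref{consone}-style formula for $\bK$ in terms of $\by_1+\by_2+\by_3$ (used immediately after the lemma) fall out for free, while the paper's sandwich identity exhibits the symmetric invariant $\bR_{n+1}^2+\bR_n^2+\bone$ directly.
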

\begin{proof}
Introducing $\bC_n=\bR_{n+1}^{-1}\bR_n\bR_{n+1}\bR_n^{-1}$, 
and noting that $\bR_n$ commutes with the r.h.s. of \eqref{ncqsys}, we have
$\bR_n^2+1=\bR_{n+1}\bC_{n-1}\bR_{n-1}=\bR_{n+1}\bC_n \bR_{n-1}$, 
hence $\bC_n=\bC_{n-1}=\bC$. Note that the conservation of $\bC_n$ implies a
quasi-commutation relation:
\begin{equation}\label{quasicomm}
\bR_n\bR_{n+1}=\bR_{n+1}\bC \bR_n \end{equation}
and that the $A_1$ $Q$-system may be rewritten as:
\begin{equation}\label{neqaone}
\bR_{n+1}\bC\bR_{n-1}=\bR_n^2+\bf 1
\end{equation}
Introducing $\bK_n=(\bR_{n+1}+C \bR_{n-1})\bR_n^{-1}$
and $\bL_n=\bR_n^{-1}(\bR_{n+1}\bC+\bR_{n-1})$, 
we find that $\bR_n\bK_n\bR_n=\bR_n\bL_n\bR_n$
by use of the quasi-commutation \eqref{quasicomm}, hence $\bK_n=\bL_n$.
Moreover, computing 
$$\bR_{n+1}\bK_n\bR_n= \bR_{n+1}^2+\bR_{n}^2+{\bf 1}=\bR_{n+1}\bL_{n+1}\bR_n$$
we deduce that $\bK_n=\bL_{n+1}=\bK_{n+1}=\bK$, and the Lemma follows.
\end{proof}

The conserved quantities may be rewritten as:
\begin{eqnarray*}
\bC&=& \bR_{n+1}^{-1}\bR_n\bR_{n+1}\bR_n^{-1}=\by_3\by_1\\
\bK&=&\bR_{n+1}\bR_n^{-1}+\bR_{n+1}^{-1}\bR_n^{-1}+\bR_{n+1}^{-1}\bR_n
=\by_1+\by_2+\by_3
\end{eqnarray*}
where we have defined:
\begin{equation}\label{ncys}
\by_1=\bR_{1}\bR_0^{-1}\qquad \by_2=\bR_{1}^{-1}\bR_0^{-1}
\qquad \by_3=\bR_{1}^{-1}\bR_0
\end{equation}
The Lemma implies the existence of a linear recursion relation:
$$\bR_{n+1}-\bK \bR_n+\bC \bR_{n-1}=0$$

For $t$ a central element in $\cA$, introducing the formal generating function 
$\bF_0(t)=\sum_{n\geq 0}t^n \bR_n\bR_0^{-1}$, we easily compute:
\begin{eqnarray*}
\bF_0(t)&=&({\bf 1}-(\by_1+\by_2+\by_3) t+\by_3\by_1 t^2)^{-1}({\bf 1}-t(\by_2+\by_3))\\
&=&
\left({\bf 1} - t \left({\bf 1} - t \left({\bf 1} - t \by_3\right)^{-1}\by_2\right)^{-1} \by_1\right)^{-1}
\end{eqnarray*}
This may be expanded as a series in $t$, whose coefficients are polynomials of $\by_1,\by_2,\by_3$
with non-negative integer coefficients. The theorem \ref{ncqpositaone}
follows from the definition of $F_0(t)$ and the explicit form of the $\by$'s \eqref{ncys}.

Finally, let us show that, like in the $Q$ and $T$-system cases, we may implement mutations by
use of the non-commutative rearrangement Lemmas \ref{ncreartone} and \ref{ncreartwo}.
We write:
\begin{eqnarray*}
\bF_0(t)&=& \left({\bf 1}-t\big({\bf 1}-t({\bf 1}-t\by_3)^{-1}\by_2\big)^{-1}\by_1\right)^{-1}
={\bf 1}+t \bF_1(t)\by_1\\
\bF_1(t)&=& \big({\bf 1}-t \by_1-t({\bf 1}-t\by_3)^{-1}\by_2\big)^{-1}
=\left({\bf 1}-t\big({\bf 1}-({\bf 1}-\by_3')^{-1}\by_2'\big)^{-1}\by_1'\right)^{-1}
\end{eqnarray*}
where we have first applied Lemma \ref{ncreartone} and then Lemma \ref{ncreartwo}, with $\bu=0$ and:
\begin{eqnarray*}
\by_1'&=&\by_1+\by_2=(\bR_1+\bR_1^{-1})\bR_0^{-1}=\bR_2 \bR_1^{-1}\\
\by_2'&=&\by_3 \by_2 \bR_1 \bR_2^{-1}=\bR_1^{-1}\bC^{-1} \bR_2^{-1}=\bR_2^{-1}\bR_1^{-1}\\
\by_3'&=&\by_3\by_1 \bR_1 \bR_2^{-1}=\bC \bR_1 \bR_2^{-1}=\bR_2^{-1}\bR_1
\end{eqnarray*}
where we have used the $Q$-system relation \eqref{ncqsys}
(first line) and the quasi-commutation relation \eqref{quasicomm}
(second and third).

We conclude that $\bF_1(t)$ is precisely the generating function 
$\sum_{n\geq 0} t^n \bR_{n+1}\bR_1^{-1}$
where the $\bR_n$'s are now expressed in terms of $\bx_1=\mu_1^+(\bx_0)$. 
More generally, repeated
application of the non-commutative rearrangement
Lemmas \ref{ncreartone} and \ref{ncreartwo} leads to iterated mutations.

\subsection{The affine non-commutative rank two cases}\label{ncranktwosec}
The rank two affine cluster algebras have a
fundamental skew-symmetrizable exchange matrix $B_0=\begin{pmatrix}0 & -c \\
b & 0 \end{pmatrix}$, with $(b,c)=(2,2),(1,4)$ or $(4,1)$. 
The $A_1$ $Q$-system of previous section reduces to the case 
$b=c=2$ when $\bC={\bf 1}$. 
The two other cases also have non-commutative counterparts, 
namely the system
\begin{eqnarray}\label{qsysonefour}
\bR_{2n}\bR_{2n-1}^{-1} \bR_{2n-2}\bR_{2n-1}&=&1+\bR_{2n-1}\nonumber \\
\bR_{2n+1}\bR_{2n}^{-1}\bR_{2n-1}\bR_{2n}&=&1+(\bR_{2n})^4 
\end{eqnarray}
and that obtained by interchanging even and odd indices.

We have the positivity Theorem:
\begin{thm}\label{posionefour}
For all $n\in \Z$, the solutions $\bR_n$ of \eqref{qsysonefour} are positive Laurent
polynomials of any initial data $\bx_i=(\bR_i,\bR_{i+1})$.
\end{thm}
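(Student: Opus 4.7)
The plan is to mirror the three-step strategy that worked for the non-commutative $A_1$ $Q$-system in Section \ref{ncaonesec}. First, one reduces to the fundamental situation by exploiting symmetries of \eqref{qsysonefour}. Because the two recursions alternate with parity, the natural translation is $\sigma:\bR_n\mapsto\bR_{n+2}$ (an automorphism), supplemented by an anti-automorphism $\tau$ exchanging $(\bR_0,\bR_1)\leftrightarrow (\bR_1,\bR_0)$ if one first checks that \eqref{qsysonefour} is invariant under $n\mapsto 1-n$. These together allow us to restrict attention to $n\geq 0$ and the initial data $\bx_0=(\bR_0,\bR_1)$, and to transport any positive Laurent expression to all $\bx_i$.

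Second, one must produce conserved quantities. Following the $A_1$ template, I would first look for a ``quasi-central'' element $\bC$ by rewriting \eqref{qsysonefour} as $\bR_{n+1}\bR_n^{-1}\bR_{n-1}\bR_n=\bone+\bR_n^{b_n}$ with $b_n\in\{1,4\}$, and then checking that a quantity of the form $\bR_{n+1}^{-1}\bR_n^{a}\bR_{n+1}\bR_n^{-a}$ (with $a$ chosen by parity) is independent of $n$. This should yield a quasi-commutation relation $\bR_n\bR_{n+1}=\bR_{n+1}\bC\bR_n$ or a parity-dependent analogue. Next, one looks for a Markov-type conserved quantity $\bK$, of the form
\begin{equation*}
\bK=\bigl(\bR_{n+1}+\mathrm{(correction)}+\bC\bR_{n-1}\bigr)\bR_n^{-b_n/\cdots},
\end{equation*}
carefully balanced so that $\bK=\bK_{n}=\bK_{n+1}$ on both parities. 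The key calculation is the telescoping identity between $\bR_{n+1}\bK_n\bR_n$ and $\bR_{n+1}\bK_{n+1}\bR_n$, as in the $A_1$ case. One then expects a linear (though longer, because of the degree 4 term on one parity) recursion for the $\bR_n$, giving the generating function $\bF_0(t)=\sum_{n\geq 0}t^n\bR_n\bR_0^{-1}$.

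Third, one expresses $\bF_0(t)$ as a finite non-commutative continued fraction in some weights $\by_i$ that are Laurent monomials in $\bR_0,\bR_1$. In the commuting limit these should match the continued fraction expressions known for the classical rank two $(1,4)$ cluster algebra, and the non-commutative version is obtained by replacing ordinary inverses with $({\bone}-\,\cdot\,)^{-1}$ in the manner of \eqref{nconTi}. The continued fraction is manifestly positive term-by-term when expanded in $t$, and the $\by_i$ are Laurent monomials in $\bx_0$, yielding Theorem \ref{posionefour} for $\bx_0$. Mutations $\mu^{\pm}$ are then implemented by iterating the non-commutative rearrangement Lemmas \ref{ncreartone} and \ref{ncreartwo}, giving positive continued fractions $\bF_{i}(t)=\sum_{n\geq 0}t^n\bR_{n+i}\bR_i^{-1}$ for every admissible $\bx_i$.

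The main obstacle is Step 2: identifying the correct conserved quantities $\bC,\bK$ in a form that (a) respects the parity asymmetry between the two relations in \eqref{qsysonefour}, and (b) produces a \emph{linear} recursion despite the quartic term $\bR_{2n}^4$ on the odd line. In the commutative limit the conserved Markov quantity for $(b,c)=(1,4)$ has seven terms per period, so one should expect a longer recursion and, correspondingly, a deeper continued fraction than in the $A_1$ case; the delicate point is to order the non-commutative factors so that the analogue of \eqref{quasicomm} still telescopes. Once this conserved quantity is found in the right order, steps 1, 3 and the mutation analysis follow the same pattern as Section \ref{ncaonesec} with essentially no new difficulty.
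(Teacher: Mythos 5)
Your outline follows the right general template, but it leaves unresolved precisely the step that carries the content of the proof, and the direction you propose for resolving it is not the one that works. You anticipate a single linear recursion for \emph{all} $\bR_n$, with a longer (``seven-term'') conserved quantity and a correspondingly deeper continued fraction to absorb the quartic term. The paper does something different and simpler: it keeps the \emph{same} conserved quantities as in the $A_1$ case, namely $\bC=\bR_{n+1}^{-1}\bR_n\bR_{n+1}\bR_n^{-1}$ (for all $n$, giving the quasi-commutation $\bR_n\bR_{n+1}=\bR_{n+1}\bC\bR_n$) and $\bK=(\bR_{2n+2}+\bC\bR_{2n-2})\bR_{2n}^{-1}$, but the latter is conserved only along the \emph{even-indexed subsequence}. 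This yields a three-term linear recursion for $(\bR_{2n})$ alone, and the generating function treated is $\bF_0(t)=\sum_{n\geq 0}t^n\bR_{2n}\bR_0^{-1}$, not $\sum t^n\bR_n\bR_0^{-1}$. The resulting continued fraction has the same depth as in the $A_1$ case (with a $t^2$ in the middle level), and its weights $\by_1,\by_2,\by_3$ are \emph{positive Laurent polynomials} of $(\bR_0,\bR_1)$ --- not Laurent monomials, as your step 3 assumes; their positivity must be, and is, checked by explicit computation using the system.

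The more serious omission is the odd-indexed terms. Since the recursion and continued fraction only produce the $\bR_{2n}$, positivity of $\bR_{2n+1}$ does not follow from the continued-fraction machinery at all. The paper recovers it from the identity $\bR_{2n+1}=\bR_{2n+2}\bC\bR_{2n}-\bone$, and the subtraction of $\bone$ forces a separate argument: one must check directly that the positive Laurent expansion of $\bR_{2n+2}\bC\bR_{2n}$ contains the term $\bone$, so that cancelling it leaves a positive Laurent polynomial. Nothing in your plan addresses this, and it cannot be waved away as ``the same pattern as Section \ref{ncaonesec}''. Finally, note that because the system alternates with parity, translation only identifies $\bx_i$ with $\bx_{i+2}$, so the base cases $\bx_0$ and $\bx_1$ must both be treated; the paper does the $\bx_1$ case by an explicit rearrangement of $t^{-1}(\bF_0(t)-\bone)\by_1^{-1}$ with new weights $\by_1',\by_2',\by_3'$ whose positivity is again verified by hand, rather than by a formal appeal to Lemmas \ref{ncreartone} and \ref{ncreartwo}.
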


Repeating the approach of previous section, we note that the 
symmetries of the system allow to restrict to the case of $n\geq 0$ and 
to the initial data $\bx_0$ and $\bx_1$.
Next, we identify two conserved quantities:
\begin{eqnarray*}
\bC&=&\bR_{n+1}^{-1}\bR_n\bR_{n+1}\bR_{n}^{-1} \\
\bK&=&(\bR_{2n+2}+\bC\bR_{2n-2})\bR_{2n}^{-1}
\end{eqnarray*}
We finally arrive at the following continued fraction expression for the generating function
$\bF_0(t)=\sum_{n\geq 0} t^n \bR_{2n}\bR_0^{-1}$ in terms of $\bx_0$:
\begin{equation}\label{contfraconefour}
F(t)=({\bf 1}-\bK t+\bC t^2)^{-1}({\bf 1}-t(\bK-\bR_2\bR_0^{-1}))=
\left({\bf 1}-t \by_1 -t^2({\bf 1}-t \by_3)^{-1} \by_2\right)^{-1} 
\end{equation}
where
\begin{eqnarray*}
\by_1&=&\bR_2 \bR_0^{-1}=({\bf 1}+\bR_1^{-1})\bR_0^{-1}\bR_1\bR_0^{-1}\\
\by_2&=& (\bK-\by_1)\by_1 -\bC\\
&=&({\bf 1}+\bR_1^{-1})\bR_0^{-2}({\bf 1}+\bR_1^{-1})\bR_0^{-1}\bR_1\bR_0^{-1}
+\bR_1^{-1}\bR_0^2\bR_1^{-1}\bR_0^{-1}\bR_1\bR_0^{-1}\\
\by_3&=&\bK-y_1=\bC\bR_{-2}\bR_0^{-1}=({\bf 1}+\bR_1^{-1})\bR_0^{-2}
+\bR_1^{-1}\bR_0^2 
\end{eqnarray*}
This proves positivity of all $\bR_{2n}$ in terms of $\bx_0$. 
The case of $\bx_1$ is obtained by mutation/rearrangement of $\bF_0(t)$. We write:
\begin{eqnarray*}
t^{-1}(\bF_0(t)-{\bf 1})\by_1^{-1}&=&({\bf 1}-\bK t+\bC t^2)^{-1}({\bf 1}-t\bC\by_1^{-1})\\
&=& \left({\bf 1}-t \by_1' -t^2({\bf 1}-t \by_3')^{-1} \by_2'\right)^{-1} 
\end{eqnarray*}
where
\begin{eqnarray*}
\by_1'&=&\bK-\by_3'=\bR_4\bR_2^{-1}=\bR_2^{-1}(1+\bR_1)\bR_2^{-1}\bR_1^{-1}+\bR_2^2\bR_1^{-1} \\
\by_2'&=&\by_3'(\bK-\by_3')-\bC \\
&=& \bR_2^{-1}({\bf 1}+\bR_1)\bR_2^{-2} ({\bf 1}+\bR_1)\bR_2^{-1}\bR_1^{-1}+\bR_1^{-1}\\
\by_3'&=&\bC\by_1^{-1}=\bC \bR_0\bR_2^{-1}=\bR_2^{-1}({\bf 1}+\bR_1)\bR_2^{-1}
\end{eqnarray*}
We get a series in $t$ with coefficients that are polynomial in $\by_1',\by_2',\by_3'$, and
as each of them is itself a positive Laurent polynomial of the initial data  
we deduce positivity for all $\bR_{2n}$ in terms of $\bx_1$.
The case of the variables $\bR_{2n+1}=\bR_{2n+2}\bC \bR_{2n}-{\bf 1}$ is easy, by directly
checking that $\bR_{2n+2}\bC \bR_{2n}$ contains the term $\bf 1$. This completes the proof of
Theorem \ref{posionefour}.

\subsection{The non-commutative $\widehat{A_{2k}}$ systems}\label{ncaffinesec}

We now consider the systems
\begin{eqnarray}
\bu_{2n+2k+1} \bu_{2n}&=& \bu_{2n+1}\bu_{2n+2k}+{\bf 1}\label{rk3one}\\
\bu_{2n+1}\bu_{2n+2k+2}&=& \bu_{2n+2k+1}\bu_{2n+2} +{\bf 1}\label{rk3two}
\end{eqnarray}
with initial data $\bx_i=(\bu_i,\bu_{i+1},\ldots, \bu_{i+2k})$ for all $i\in\Z$.
The commutative version of (\ref{rk3one}-\ref{rk3two}) corresponds to mutations 
within the affine $\widehat{A_{2k}}$ cluster algebra.

We have the following positivity result (\cite{NewRKPDF}):
\begin{thm}\label{positrkthree}
The general solution $\bu_n$ to the system 
(\ref{rk3one}-\ref{rk3two}) is a positive Laurent
polynomial of any initial data $\bx_i$, for all $n,i\in\Z$.
\end{thm}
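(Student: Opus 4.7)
The plan is to implement the three-step strategy already used in Sections \ref{ncaonesec} and \ref{ncranktwosec}. First, translation symmetry in $n$ and a reflection symmetry of the system (exchanging the two families of defining relations) give rise to anti-automorphisms $\sigma,\tau$ fixing $\bf 1$; these reduce the problem to proving Laurent positivity of $\bu_n$ in the single initial data set $\bx_0 = (\bu_0, \bu_1, \ldots, \bu_{2k})$ for $n \geq 0$, the remaining cases following by applying iterates of $\sigma,\tau$ to a positive Laurent expression.

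Next, I would derive non-commutative conserved quantities and a linear recursion for $\bu_n$. Relations (\ref{rk3one}-\ref{rk3two}) have a ``discrete Wronskian'' form, so one expects a quasi-commutation relation $\bu_n \bu_m = \bu_m\, \bC_{n,m}\, \bu_n$ encoded by operators $\bC_{n,m}$ that, as in the $A_1$ case of Section \ref{ncaonesec}, the system forces to depend only on $n-m$. Combining these with trace-like combinations of the form $(\bu_{n+2k+2} + \bC\, \bu_{n-2k})\bu_n^{-1}$ should produce a collection of $n$-independent conserved quantities $\bK^{(1)},\ldots,\bK^{(2k+1)}$, whose existence is equivalent to a finite non-commutative linear recursion on $\bu_n$ of order $2k+2$. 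Checking conservation reduces in each case to an algebraic identity that follows by applying (\ref{rk3one}-\ref{rk3two}) once on each side.

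Given such a recursion, the generating function $\bF_0(t)=\sum_{n\geq 0} t^n \bu_n\bu_0^{-1}$ becomes a rational expression in $t$ with operator coefficients that are positive Laurent polynomials in the initial data. Just as in \eqref{contfraconefour} and \eqref{nconTi}, one then massages this rational expression into a finite non-commutative continued fraction with skeleton weights $\by_1,\ldots,\by_{2k+1}$, each a positive Laurent monomial in $\bx_0$; expanding in $t$ proves Laurent positivity of $\bu_n$ in $\bx_0$.

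Finally, to propagate positivity to every other initial data set $\bx_i$, I would implement the elementary mutation $\bx_i \to \bx_{i+1}$ as a rearrangement of the analog $\bF_i(t) = \sum_{n\geq 0} t^n \bu_{n+i}\bu_i^{-1}$ of $\bF_0$, built from the non-commutative rearrangement Lemmas \ref{ncreartone} and \ref{ncreartwo}, exactly as in the $A_1$ case at the end of Section \ref{ncaonesec}: extracting the leftmost factor via Lemma \ref{ncreartone} and then applying Lemma \ref{ncreartwo} level by level produces new skeleton weights $\by_j'$ that, using the $\bu$-relations together with the quasi-commutation relations derived in step two, simplify to positive Laurent monomials in $\bx_{i+1}$. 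The main obstacle I anticipate is step two: identifying the correct set of conserved quantities and the precise form of the quasi-commutation relations for general $k$, since the interaction range of (\ref{rk3one}-\ref{rk3two}) grows with $k$ and may require a careful induction on $k$, or a direct combinatorial ansatz (in the spirit of the hard-particle partition functions of \cite{DFK3}), to even write down the $\bK^{(j)}$ explicitly.
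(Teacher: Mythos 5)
Your overall template (symmetry reduction, conserved quantities, continued fraction, positivity of the weights) is the right one, but the core of your argument --- step two --- is both left undone and aimed at the wrong structure. You model the system on the finite $A_r$ $Q$-system and posit $2k+1$ conserved quantities $\bK^{(1)},\dots,\bK^{(2k+1)}$ with a linear recursion of order $2k+2$ on the full sequence $(\bu_n)$, and you yourself flag that you cannot write these down. The paper does something much simpler, exploiting the affine nature of the system: there is a \emph{single} conserved quantity
$$\bK=(\bu_{2n+2k}+\bu_{2n-2k})\bu_{2n}^{-1}=\bu_{2n+1}^{-1}(\bu_{2n+1+2k}+\bu_{2n+1-2k})
=\bu_0\bu_{2k}^{-1}+\bu_{2k}\bu_0^{-1}+\sum_{j=1}^k \bu_{2j-1}^{-1}(\bu_{2j}^{-1}+\bu_{2j+2}^{-1}),$$
which yields a \emph{three-term} recursion with index step $2k$ on each of the $2k$ arithmetic subsequences, hence generating functions $\bF_i(t)=\sum_n t^n\bu_{2i+2kn}$ and $\bG_i(t)=\sum_n t^n\bu_{2i+1+2kn}$ equal to depth-two continued fractions with only three weights $\by,\bz,\bw$ per fraction (and denominator $\bone-(\by+\bz)t+t^2$, with no quasi-commutation operator $\bC$ needed). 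So the ``careful induction on $k$'' you anticipate is not required; the interaction range growing with $k$ is absorbed into the step of the subsequence, not into the depth of the fraction.

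Two further points. First, your step four (implementing $\bx_i\to\bx_{i+1}$ by the rearrangement Lemmas \ref{ncreartone} and \ref{ncreartwo}) is unnecessary here: unlike the Motzkin-path-indexed data of the $Q$-system, all admissible data sets $\bx_i=(\bu_i,\dots,\bu_{i+2k})$ are translates of one another, so the anti-automorphisms of your step one already cover every $\bx_i$ once positivity in $\bx_0$ is established. Second, and more importantly, your assumption that the skeleton weights will be ``positive Laurent \emph{monomials}'' in $\bx_0$ fails: the weights $\bw_{2i}=\bz_{2i}\by_{2i}-\bone$ and $\bw_{2i+1}=\by_{2i+1}\bz_{2i+1}-\bone$ involve a genuine subtraction, and positivity requires the separate observation that the $\by$'s and $\bz$'s are subsums of $\bK$ containing respectively $\bu_{2k}\bu_0^{-1}$ and $\bu_0\bu_{2k}^{-1}$, so that both products $\by\bz$ and $\bz\by$ contain the term $\bone$ which is then cancelled. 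Without this cancellation argument the proof of positivity does not close, so as written your proposal has a genuine gap at its central step.
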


This is proved in the usual way. We may restrict ourselves to $n\geq 0$ and to the initial data 
$\bx_0$.
We find the conserved quantity
\begin{eqnarray*}
\bK&=&(\bu_{2n+2}+\bu_{2n-2})\bu_{2n}^{-1}=\bu_{2n+1}^{-1}(\bu_{2n+3}+\bu_{2n-1})\\
&=&\bu_0\bu_{2k}^{-1}+\bu_{2k}\bu_0^{-1}+\sum_{j=1}^k \bu_{2j-1}^{-1}(\bu_{2j}^{-1}+\bu_{2j+2}^{-1})
\end{eqnarray*}
This allows to derive continued fraction expressions for $\bF_i(t)=\sum_{n\geq 0} t^n \bu_{2i+2kn}$
and $\bG_i(t)=\sum_{n=0}^\infty t^n \bu_{2i+1+2kn}$,
$i=0,1,...,k-1$:
\begin{eqnarray*}
\bF_i(t)&=& ({\bf 1}-(\by_{2i}+\bz_{2i})t+t^2)^{-1}({\bf 1}-\bz_{2i} t) \bu_{2i}\\
&=&\left( {\bf 1}-\by_{2i} t-({\bf 1}-\bz_{2i}t)^{-1}\bw_{2i}t^2\right)^{-1}\bu_{2i}\\
\bG_i(t)&=& \bu_{2i+1} ({\bf 1}-\bz_{2i+1} t)({\bf 1}-(\by_{2i+1}+\bz_{2i+1})t+t^2)^{-1}\\
&=&\bu_{2i+1}\left( {\bf 1}-t\by_{2i+1}-t^2\bw_{2i+1}({\bf 1}-\bz_{2i+1} t)^{-1}\right)^{-1}
\end{eqnarray*}
where 
\begin{equation*}
\begin{matrix}
\by_{2i}=\bu_{2i+2k}\bu_{2i}^{-1}\hfill & 
\by_{2i+1}=\bu_{2i+1}^{-1}\bu_{2i+1+2k}\hfill \\
\bz_{2i}=\bK-\by_{2i}=\bu_{2i-2k}\bu_{2i}^{-1}\qquad \hfill & 
\bz_{2i+1}=\bK-\by_{2i+1}=\bu_{2i+1}^{-1}\bu_{2i+1-2k}\hfill \\
\bw_{2i}=\bz_{2i}\by_{2i}-{\bf 1}\hfill & \bw_{2i+1}=\by_{2i+1}\bz_{2i+1}-{\bf 1}\hfill 
\end{matrix}
\end{equation*}
We have manifest positivity in the $\by,\bz,\bw$'s. Moreover, 
the $\by$'s are defined recursively by $\by_0=\bu_{2k}\bu_0^{-1}$
and $\by_{2j}=\by_{2j-1}+\bu_{2j-1}^{-1}\bu_{2j}^{-1}$, $\by_{2j+1}=\by_{2j}+\bu_{2j+1}^{-1}\bu_{2j}^{-1}$,
hence they form subsums of $\bK$, and so do the $\bz$'s. Finally all $\by$'s contain the term
$\bu_{2k}\bu_0^{-1}$, while all $\bz$'s contain $\bu_0\bu_{2k}^{-1}$, hence both $\by_i\bz_i$ and $\bz_i\by_i$
contain $\bf 1$: we conclude that the $\by,\bz,\bw$'s are all positive Laurent polynomials of $\bx_0$.
This completes the proof of Theorem \ref{positrkthree}. 

\subsection{Non-commutative paths: a general study using quasi-determinants}\label{ncgensec}

In view of the accumulated evidence for path models and continued fractions 
to play an important role in the discrete non-commutative integrable systems,
we now address the general case of paths on the graphs $\Gamma_\bm$
of the $A_r$ $Q$-system solution, but with fully non-commutative weights.

For any $\bm$  Motzkin path of length $r-1$,
we define the family of non-commutative continued fractions 
$\bF_\bm(t)$
as follows. Let $(\by_1(\bm),...,\by_{2r+1}(\bm))\in\cA^{2r+1}$ be a given collection
of non-commutative ``skeleton" weights.
We define the non-commutative Jacobi-type fraction $\bJ(\bx)$ 
for a family of non-commutative weights
$\bx=(\bx_1,...,\bx_{2r+1})\in \cA^{2r+1}$ by:
\begin{equation}\label{jaconc}
\bJ(\bx)=
\left({\bf 1}-\bx_1 -\big( {\bf 1}-\bx_3 -\big(... ( {\bf 1}-\bx_{2r-1}-
({\bf 1}-\bx_{2r+1})^{-1}\bx_{2r})^{-1} ...\big)^{-1}\bx_4\big)^{-1}\bx_2\right)^{-1}
\end{equation}
We set
\begin{equation} \label{ncJfrac}
\bF_\bm(t)=1+t\, \bJ\big({\hat \by}_1(\bm),...,{\hat \by}_{2r+1}(\bm)\big)\, \by_1(\bm)
\end{equation}
where for $\al=2,3,...,r$:
\begin{eqnarray}
&&\left\{ \begin{matrix}  
{\hat \by}_{2\al-1}=t\by_{2\al-1}\hfill 
& {\hat \by}_{2\al}=t\by_{2\al}\hfill  
& {\rm if}\ m_{\al+1}=m_{\al}\hfill \\ 
{\hat \by}_{2\al-1}=t(\by_{2\al-1}+\by_{2\al})\hfill  
& {\hat \by}_{2\al}=t^2\by_{2\al+1}\by_{2\al}\hfill  
& {\rm if}\ m_{\al+1}=m_{\al}+1\hfill \\
{\hat \by}_{2\al-1}=t\by_{2\al-1}-\by_{2\al+1}^{-1}\by_{2\al}\hfill  
& {\hat \by}_{2\al}=\by_{2\al+1}^{-1}\by_{2\al}\hfill 
& {\rm if}\ m_{\al+1}=m_{\al}-1\hfill 
\end{matrix} \right.\nonumber \\
{\rm and} &&\quad {\hat \by}_{2r+1}=\by_{2r+1} \nonumber 
\end{eqnarray}

Like in the commutative case, we may easily bring the continued fraction $\bF_\bm(t)$ to
a manifestly positive form where all non-trivial coefficients are
Laurent monomials of the $\by$'s with coefficient $t$ (we still call this the
{\it canonical} form), 
and for which the coefficients of the formal series expansion in $t$
are therefore Laurent polynomials of the $\by$'s 
with non-negative integer coefficients. This is 
readily performed by applying iteratively from the bottom to the top of the 
Jacobi-type continued fraction
the non-commutative rearrangement Lemma
\ref{ncreartone}, in the form
\begin{equation}
\label{ncflat}
\ba+\bb+({\bf 1}-\bc -\bu)^{-1}\bc \bb=\ba+\big({\bf 1}-({\bf 1}-\bu)^{-1}\bc\big)^{-1}\bb
\end{equation}
or its ``converse":
\begin{lemma}\label{ncrearthree}
For all $\ba,\bb,\bc,\bu\in  \cA$, $c$ invertible, we have:
$$\ba-\bc^{-1} \bb +({\bf 1}-\bc-\bu)^{-1}\bc^{-1} \bb=\ba+( {\bf 1}-\bc-\bu)^{-1}(\bb+\bu\bc^{-1} \bb)$$
\end{lemma}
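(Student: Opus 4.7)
\noindent The plan is to verify this identity by direct algebraic manipulation; unlike Lemma \ref{ncreartwo}, where the new quantities $\ba',\bb',\bc'$ must be guessed, here both sides are already specified explicitly, so the work reduces to clearing the single inverse $({\bf 1}-\bc-\bu)^{-1}$ and matching terms. The strategy is the exact non-commutative analogue of the proof of the commutative version (Lemma \ref{rearthree}).

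First, I would cancel the common $\ba$ from both sides, reducing the claim to
$$-\bc^{-1}\bb+({\bf 1}-\bc-\bu)^{-1}\bc^{-1}\bb=({\bf 1}-\bc-\bu)^{-1}(\bb+\bu\bc^{-1}\bb).$$
Next, I would multiply on the left by ${\bf 1}-\bc-\bu$, which is implicitly assumed invertible, obtaining the polynomial identity
$$-({\bf 1}-\bc-\bu)\bc^{-1}\bb+\bc^{-1}\bb=\bb+\bu\bc^{-1}\bb.$$
Finally, I would expand $({\bf 1}-\bc-\bu)\bc^{-1}\bb=\bc^{-1}\bb-\bb-\bu\bc^{-1}\bb$ using only distributivity and $\bc\bc^{-1}={\bf 1}$; substituting this back collapses the left-hand side to $\bb+\bu\bc^{-1}\bb$, matching the right-hand side.

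The only inputs used are distributivity and the invertibility of $\bc$ and of ${\bf 1}-\bc-\bu$. Crucially, $\bb$ always sits in the rightmost position on both sides of the identity, so no quasi-commutation or reordering of non-commuting letters is required, in contrast to Lemma \ref{ncreartwo}. Accordingly, there is no real obstacle in the proof itself; the content of the lemma lies rather in its application, namely that it is the non-commutative converse of \eqref{ncflat}, providing precisely the move needed to eliminate the subtraction in ${\hat\by}_{2\al-1}=t\by_{2\al-1}-\by_{2\al+1}^{-1}\by_{2\al}$ that appears in the Jacobi-type expression \eqref{ncJfrac} for $\bF_\bm(t)$ when $m_{\al+1}=m_\al-1$, thereby turning $\bF_\bm(t)$ into a manifestly positive, possibly multiply branching continued fraction in the skeleton weights.
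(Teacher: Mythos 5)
Your proposal is correct and matches the paper's intent exactly: the paper states this lemma without proof (in the same spirit as Lemmas \ref{reartone} and \ref{reartwo}, described as ``easily proved by direct calculation''), and your direct verification — cancel $\ba$, left-multiply by ${\bf 1}-\bc-\bu$, and expand using only distributivity and $\bc\bc^{-1}={\bf 1}$ — is the intended argument. Your observation that $\bb$ stays in the rightmost position, so no reordering of non-commuting letters is needed, correctly identifies why the non-commutative case is no harder than the commutative Lemma \ref{rearthree}.
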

The net effect is to ``undo" the ${\hat \by}$'s, at the expense of creating 
branchings and new ``redundant" weights that are 
Laurent monomials of the skeleton weights, all with coefficient $t$, and therefore all manifestly positive.

The manifestly positive canonical form of $\bF_\bm(t)$ can now be directly interpreted
as the partition function  of non-commutative weighted paths on the {\it same} rooted 
target graph $\Gamma_\bm$ as for the commuting case.

The continued fractions $\bF_\bm(t)$ are related via non-commutative mutations as follows.
Restricting like in the commutative case to the cases (i) and (ii) for $\bm$ and 
$\bm'=\mu_\al^+(\bm)$, we see that the forward mutation $\mu_\al^+$ is implemented
on $\bF_\bm(t)$ written in the Jacobi form \eqref{ncJfrac} 
via the following sequences of rearrangements:
\begin{itemize} 
\item{\bf Case (i):}(a) Undo ${\hat \by}_{2\al-1},{\hat \by}_{2\al}$ 
by applying Lemma \ref{ncreartone}  (b) Apply Lemma \ref{ncreartwo} 
to the piece containing $\by_{2\al-1},\by_{2\al}$ and
$\by_{2\al+1}$ in factor (c) Go back to the Jacobi form
by using \eqref{ncflat} in reverse
\item{\bf Case (ii):}(a) Apply Lemma \ref{ncreartwo} to the piece containing 
$\by_{2\al-1},\by_{2\al},\by_{2\al+1},\by_{2\al+2}$ (b) Go back to the Jacobi form
by using \eqref{ncflat} and Lemma \ref{ncrearthree} in reverse
\end{itemize}
This produces $\bF_{\bm'}(t)$, in Jacobi form as well. Note that when $\al=1$ 
we have to apply \eqref{ncflat} at the level zero of the continued fraction,
with $\ba=0$, $\bb=\bone$ and $\bc=\by_1(\bm)$. So the transformation
of the continued fraction $\bF_\bm(t)$ reads:
\begin{eqnarray*}
\al=1:\quad \bF_\bm(t)&=&\bone +t  \bF_{\bm'}(t) \by_1(\bm)\\
\al>1:\quad \bF_\bm(t)&=& \bF_{\bm'}(t)\\
\end{eqnarray*}

It is easy to follow the transformations of 
skeleton weights in the process. 
The mutation $\mu_\al:\by\equiv \by(\bm)\mapsto \by'\equiv\by(\bm')$ acts on the skeleton weights as follows:
\begin{eqnarray}
{\rm \bf Case}\  (i)\ :&& \left\{ \begin{matrix} 
\by_{2\al-1}' & = & \by_{2\al-1}+\by_{2\al} \\
\by_{2\al}' & = & \by_{2\al+1}\by_{2\al} (\by_{2\al-1}+\by_{2\al})^{-1} \\
\by_{2\al+1}' & = & \by_{2\al+1}\by_{2\al-1} (\by_{2\al-1}+\by_{2\al})^{-1}
\end{matrix} \right. \label{caseonenc}\\
{\rm \bf Case}\  (ii)\ :&&\left\{ \begin{matrix} 
\by_{2\al-1}' & = & \by_{2\al-1}+\by_{2\al} \\
\by_{2\al}' & = & \by_{2\al+1}\by_{2\al} (\by_{2\al-1}+\by_{2\al})^{-1} \\
\by_{2\al+1}' & = & \by_{2\al+1}\by_{2\al-1} (\by_{2\al-1}+\by_{2\al})^{-1}\\
\by_{2\al+2}' & = & \by_{2\al+2}\by_{2\al-1} (\by_{2\al-1}+\by_{2\al})^{-1}
\end{matrix} \right. \label{casetwonc}
\end{eqnarray}
while all other skeleton weights are left unchanged.

We now wish to relate the skeleton weights $\by_\al(\bm)$ to the 
coefficients of the series expansion for $\bF_\bm(t)$. In particular, we
set $\bF_{\bm_0}(t)=\sum_{n\geq 0} t^n \bR_n\bR_0^{-1}$.

As shown in \cite{NewRKPDF} this can be done by the use of the theory of
quasi-determinants \cite{GGRW} \cite{GKLLRT}. 

For any square $k\times k$ matrix $\bA=(\ba_{i,j})$ with entries in $\cA$, 
and any $p,q\in\{1,2,...,k\}$, we define the $(p,q)$-quasi-determinant
$\vert \bA \vert_{p,q}$ as:
$$\vert \bA \vert_{p,q} =\ba_{p,q}- \sum_{i\neq p,j\neq q} \ba_{p,j} (\vert \bA \vert_{i,j})^{-1} \ba_{i,q}$$
whenever the r.h.s. is well-defined. The quasi-determinant has many interesting properties
(see  \cite{GGRW} for a detailed study). In the commutative case, it reduces to a ratio
of determinants, namely $\vert A \vert_{p,q}=\vert A \vert/\vert A^{p,q} \vert$, where $A^{p,q}$
stands for the matrix $A$ with row $p$ and column $q$ erased.
We now define the quasi-Wronskians of the sequence $(\bR_n)_{n\in \Z}$ to be:
\begin{equation} \label{wronskdefnc}
\Delta_{\al,n}=
\left\vert \begin{matrix} 
\bR_{n+\al-1} & \bR_{n+\al-2} & \cdots & \cdots & \bR_{n} \\
\bR_{n+\al-2} & \bR_{n+\al-3} & \cdots & \cdots & \bR_{n-1}\\
\vdots & \vdots &  & & \vdots \\
\bR_n & \bR_{n-1} & \cdots & \cdots & \bR_{n-\al+1}
\end{matrix} \right\vert_{1,1}
\end{equation}
with the convention $\Delta_{0,n}={\bf 1}$.
These quasi-Wronskians satisfy a generalized Desnanot-Jacobi relation
(also called the non-commutative Hirota equation \cite{NewRKPDF}):
\begin{equation}\label{nchiro}
\Delta_{\al+1,n}=\Delta_{\al,n+1}-\Delta_{\al,n}(\Delta_{\al,n-1}^{-1}-\Delta_{\al-1,n}^{-1})\Delta_{\al,n}
\end{equation}
The $A_r$ boundary condition is replaced by
$$ \Delta_{0,n}={\bf 1} \qquad \Delta_{r+2,n}=0 \quad (n\in \Z)$$
expressing simply that the continued fractions are actually
finite ($\by_{2r+2}(\bm_0)=0$), or in other words that there exists a vanishing 
left linear combination of the $\bR_n$'s with $r+2$ constant coefficients (the integrals of motion),
causing the quasi-determinant to vanish (this linear recursion relation for $\bR_n$, $n\geq 0$ is then used
to extend the definition of $\bR_n$ to $n<0$ as well, as the unique solution to the linear recursion relation).
Introducing new variables
\begin{equation}\label{defCD}
D_{\al,n}=\Delta_{\al+1,n}\Delta_{\al,n}^{-1} \qquad 
C_{\al,n}=\prod_{i=0}^{\al-1}\Delta_{\al-i,n-i}\Delta_{\al-i,n-i-1}^{-1}
\end{equation}
we may recast the above into the relations:
\begin{eqnarray}
D_{\al,n+1}&=& C_{\al+1,n+1}C_{\al,n}^{-1} D_{\al,n} C_{\al-1,n}C_{\al,n+1}^{-1}\label{done}\\
C_{\al,n+1}C_{\al-1,n}^{-1}&=&D_{\al,n}+C_{\al,n}C_{\al-1,n}^{-1}\label{dtwo}\\ 
C_{\al+1,n}C_{\al,n-1}^{-1}&=&D_{\al,n}+C_{\al+1,n}C_{\al,n}^{-1}\label{dthree}
\end{eqnarray}

Using these relations, we get the non-commutative version of (\ref{oddync}-\ref{evenync}):
\begin{thm}\label{ncweights}
The skeleton weights for $\bF_\bm(t)$ are expressed in terms of the $\bR_n$'s as:
\begin{eqnarray}\qquad
\by_{2\al-1}(\bm)&=&C_{\al,m_\al+1}C_{\al-1,m_{\al-1}+1}^{-1}\label{ncodd} \\
\by_{2\al}(\bm)&=&\left\{  \begin{matrix} 
C_{\al+1,m_{\al+1}+1}C_{\al+1,m_{\al}+1}^{-1} & {\rm if} \ m_\al=m_{\al+1}+1\\
{\bf 1} & {\rm otherwise} 
\end{matrix}\right\}\nonumber \\
&\times&
D_{\al,m_\al+1} \times \left\{ \begin{matrix} 
C_{\al-1,m_{\al}+1}C_{\al-1,m_{\al-1}+1}^{-1} & {\rm if} \ m_\al=m_{\al-1}-1\\
{\bf 1} & {\rm otherwise} 
\end{matrix}\right\}\label{nceven} 
\end{eqnarray}
\end{thm}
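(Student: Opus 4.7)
The plan is to proceed by induction on the number of forward mutations needed to reach $\bm$ from the fundamental path $\bm_0$, precisely mirroring the proof of the commutative version (\ref{oddync})--(\ref{evenync}) but using the non-commutative Desnanot--Jacobi-type relations (\ref{done})--(\ref{dthree}) in place of ordinary algebraic identities.

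First I would settle the base case $\bm = \bm_0 = (0,\ldots,0)$. Here every condition in the piecewise definition of $\by_{2\al}$ fails, so the formulas collapse to $\by_{2\al-1}(\bm_0) = C_{\al,1} C_{\al-1,1}^{-1}$ and $\by_{2\al}(\bm_0) = D_{\al,1}$. These must match the skeleton weights read off from the non-commutative continued fraction $\bF_{\bm_0}(t)$ of \eqref{nconTi} after unfolding it into the Jacobi form \eqref{ncJfrac}. Unravelling the definitions \eqref{defCD} in terms of the quasi-Wronskians \eqref{wronskdefnc}, together with the boundary conventions $C_{0,n}=\bone$ and $\Delta_{r+2,n}=0$, identifies these ratios with the expected fundamental weights, and the base case follows.

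For the inductive step I would fix a forward mutation $\mu_\al^+:\bm\mapsto\bm'$ and treat separately the two admissible configurations Case (i) and Case (ii). Assuming (\ref{ncodd})--(\ref{nceven}) hold at $\bm$, the task is to show that the mutated skeleton weights produced by rules (\ref{caseonenc})--(\ref{casetwonc}) agree with (\ref{ncodd})--(\ref{nceven}) evaluated at $\bm'$. The key observation is that the rule $\by'_{2\al-1} = \by_{2\al-1}+\by_{2\al}$ is exactly relation \eqref{dtwo} taken at $n=m_\al+1$: it reads $C_{\al,m_\al+2}C_{\al-1,m_\al+1}^{-1} = D_{\al,m_\al+1}+C_{\al,m_\al+1}C_{\al-1,m_\al+1}^{-1}$, which is what \eqref{ncodd} demands after the shift $m_\al\to m_\al+1$. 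Similarly, the rule for $\by'_{2\al}$ reduces, after cancelling a common $C_{\al+1,m_\al+1}$ factor on the left and inserting \eqref{dtwo} on the right, to relation \eqref{done} at $n=m_\al+1$; the rule for $\by'_{2\al+1}$ uses \eqref{dthree}; and in Case (ii) the rule for $\by'_{2\al+2}$ is obtained by left-multiplying the Case~(i) identity by the unchanged factor coming from $\by_{2\al+2}$. All weights with indices far from $2\al$ are untouched, and one verifies that the piecewise conditions in \eqref{nceven} track correctly: the shift $m_\al\to m_\al+1$ turns on exactly the conditional prefactors $C_{\al+1,\cdot}C_{\al+1,\cdot}^{-1}$ and $C_{\al-1,\cdot}C_{\al-1,\cdot}^{-1}$ needed to absorb the change in $C_{\al,m_\al+1}$.

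The principal obstacle is noncommutativity: factors cannot be freely reordered, so each application of (\ref{done})--(\ref{dthree}) must respect the exact placement of inverses, and the conditional prefactors in \eqref{nceven} must be threaded through products without rearrangement. Boundary cases ($\al=1$ or $\al=r$, where $C_{0,n}=\bone$ or $\Delta_{r+2,n}=0$, and where the pieces $m_0$ or $m_{r+1}$ are by convention omitted) require short separate checks, but reduce to the same identities. Once this bookkeeping is set, each inductive step amounts to a small finite verification involving a single application of one of (\ref{done})--(\ref{dthree}) per updated weight, and since every $\bm\in\cM_r$ is reachable from $\bm_0$ by iterated mutations of types (i) and (ii), the theorem follows.
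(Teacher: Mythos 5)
Your proposal is correct and follows essentially the same route as the paper: the commutative analogue (\ref{oddync})--(\ref{evenync}) is proved there by induction on forward mutations in Cases (i) and (ii) starting from $\bm_0$, and the non-commutative statement is obtained identically, with the mutation rules (\ref{caseonenc})--(\ref{casetwonc}) checked against the quasi-Wronskian relations (\ref{done})--(\ref{dthree}) exactly as you describe. The only nitpick is that the base case should be anchored in the generic series $\bF_{\bm_0}(t)=\sum_{n\geq 0}t^n\bR_n\bR_0^{-1}$ of Section \ref{ncgensec} (the non-commutative Stieltjes identification of Example \ref{fundstiel}) rather than the $T$-system-specific fraction \eqref{nconTi}.
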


\begin{example}\label{fundstiel}
The weights for the fundamental Motzkin path $\bm_0=(0,...,0)$ read:
$$\by_{2\al-1}(\bm)=C_{\al,1}C_{\al-1,1}^{-1}\qquad \by_{2\al}(\bm)=D_{\al,1} $$
Those for the ``ascending" Motzkin path $\bm_1=(0,1,2,...,r-1)$ are:
\begin{eqnarray*}
\by_{2\al-1}(\bm_1)&=&C_{\al,\al}C_{\al-1,\al-1}^{-1}=\Delta_{\al,\al}\Delta_{\al,\al-1}^{-1}\\
\qquad \by_{2\al}(\bm_1)&=&D_{\al,\al}=\Delta_{\al+1,\al}\Delta_{\al,\al}^{-1}
\end{eqnarray*}
This is nothing but the non-commutative version \cite{GKLLRT} of the Stieltjes formula
\cite{STIEL1} expressing the coefficients of the ordinary continued fraction expansion in terms of
those of the series expansion. In a sense, our construction is a double generalization of that formula.
\end{example}

Note that, due to the explicit value of 
$\by_1(\bm)=C_{1,m_1+1}=\bR_{1,m_1+1}\bR_{1,m_1}^{-1}$, we have
$\bF_\bm(t)=\sum_{n\geq 0} t^n \bR_{1,n+m_1}\bR_{1,m_1}^{-1}$.
We conclude that for all Motzkin paths $\bm$ of length $r-1$,
the quantity $\bR_{n+m_1}\bR_{m_1}^{-1}$ is the partition function
for non-commutative weighted paths on $\Gamma_\bm$. All the weights
are monic Laurent monomials of the skeleton weights of Theorem \ref{ncweights},
hence the result is a positive Laurent polynomial of the variables $C,D$
occurring in Theorem \ref{ncweights}. Note that these are themselves 
Laurent monomials of certain quasi-Wronskians \eqref{wronskdefnc}.

Except for the point $\bm_1$ of Example \ref{fundstiel} above,
for which we have a clear Laurent positive phenomenon in
the variables $(\Delta_{\al,\al},\Delta_{\al,\al-1})_{\al=1}^{r+1}$,
the classification of relevant initial data in terms of the variables 
$\Delta_{\al,n}$ for the equation \eqref{nchiro} is not clear at this stage. 
All we have at our disposal is Laurent positivity in terms of the
non-commutative variables $C,D$.
A better understanding of what the ``good" dynamical variables are
should lead to a notion of non-commutative cluster algebra, via the 
mutations thereof.

Apart from the cases already treated in these notes, all of which may be
viewed as particular cases of this section, an important application concerns
the so-called quantum cluster algebras \cite{BZ}. These are non-commutative
versions of cluster algebras, in which the variables obey
$q$-commutation relations. 
A consequence of our construction
is that the positive Laurent phenomenon holds for all $\bR_n\equiv \bR_{1,n}$, $n\geq 0$ 
for the corresponding $A_r$ quantum $Q$-system \cite{NewRKPDF}:
$$q^{\al(r+1-\al)} \, \bR_{\al,n+1}\bR_{\al,n-1} =(\bR_{\al,n})^2+\bR_{\al+1,n}\bR_{\al-1,n}$$
with the boundary condition $\bR_{0,n}=\bR_{r+1,n}=\bf 1$, and 
where the variables within the same cluster (or initial data set) have the $q$-commutation
relations for $1\leq \al<\beta \leq r+1$:
$$ \bR_{\al,n} \, \bR_{\beta,n+p} =q^{\al(r+1-\beta)p} \, \bR_{\beta,n+p}\, \bR_{\al,n}$$

\begin{thm}
As a function of the initial data $\bx_\bm=\{\bR_{\al,m_\al},\bR_{\al,m_\al+1}\}_{\al\in I_r}$,
the quantity  $\bR_{1,n+m_1}\bR_{1,m_1}^{-1}$ is a Laurent polynomial with coefficients
in $\Z_+[q,q^{-1}]$.
\end{thm}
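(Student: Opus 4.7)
The plan is to realize the quantum $A_r$ $Q$-system as a particular instance of the general non-commutative path framework developed in Section \ref{ncgensec}. Since the rearrangement Lemmas \ref{ncreartone}, \ref{ncreartwo} and \ref{ncrearthree} are stated for an arbitrary unital algebra $\cA$, and the Jacobi/canonical continued fraction formalism \eqref{jaconc}--\eqref{ncJfrac} is purely algebraic, the whole mutation machinery transports verbatim to the quantum setting. The argument thus reduces to two independent verifications: (a) that the quantum variables $\bR_n\equiv \bR_{1,n}$ satisfy a non-commutative Hirota equation of the shape \eqref{nchiro}, so that Theorem \ref{ncweights} produces the skeleton weights $\by_i(\bm)$ for every $\bm\in\cM_r$; and (b) that these skeleton weights are, in the quantum case, monic $q$-Laurent monomials in the initial data $\bx_\bm$ with a single scalar coefficient lying in $\Z_+[q,q^{-1}]$.

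For (a), we define the quasi-Wronskians $\Delta_{\al,n}$ as in \eqref{wronskdefnc} using the quantum $\bR_n$'s and verify, directly from the quantum $Q$-system relation and the $q$-commutation rules within a cluster, that $\Delta_{\al,n} = q^{c(\al,n)} \bR_{\al,n}$ for an explicit exponent $c(\al,n)$. The boundary condition $\Delta_{r+2,n}=0$ then gives a linear recursion of order $r+1$ with $q$-central coefficients and produces the Jacobi expression \eqref{ncJfrac} for $\bF_{\bm_0}(t)=\sum_{n\ge 0}t^n\bR_{1,n}\bR_{1,0}^{-1}$. Specializing Theorem \ref{ncweights} to $\bm_0$, the weights become explicit monomials of the form $\by_{2\al-1}(\bm_0)=q^{a_\al}\bR_{\al,1}\bR_{\al-1,0}\bR_{\al,0}^{-1}\bR_{\al-1,1}^{-1}$ and $\by_{2\al}(\bm_0)=q^{b_\al}\bR_{\al+1,1}\bR_{\al-1,0}\bR_{\al,0}^{-1}\bR_{\al,1}^{-1}$ in $\bx_{\bm_0}$, with $a_\al,b_\al\in\Z$. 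Converting the Jacobi form to canonical form by iterated use of \eqref{ncflat} and Lemma \ref{ncrearthree} delivers the theorem for $\bm=\bm_0$.

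For a general Motzkin path $\bm\in\cM_r$ we induct on the length of a sequence of forward mutations $\bm_0\to\cdots\to\bm$, invoking the quantum mutation rules \eqref{caseonenc}--\eqref{casetwonc}. The main obstacle, and the heart of the proof, is the inductive step: one must show that the sum $\by_{2\al-1}(\bm)+\by_{2\al}(\bm)$ appearing in these formulas (together with its inverse) collapses to a single monic $q$-Laurent monomial of $\bx_{\bm'}$ with coefficient in $\Z_+[q,q^{-1}]$. This is the $q$-deformation of the commutative identity underlying \eqref{oddync}--\eqref{evenync}: using the relation $q^{\al(r+1-\al)}\bR_{\al,n+1}\bR_{\al,n-1}=\bR_{\al,n}^2+\bR_{\al+1,n}\bR_{\al-1,n}$ together with the cluster $q$-commutation rules, one checks that the two summands differ only by a pure power of $q$ and recombine to $q^{\al(r+1-\al)}\bR_{\al,m_\al+2}\bR_{\al,m_\al+1}^{-1}$ times a monomial in the other variables of $\bx_{\bm'}$. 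Once this step is secured, the induction closes, expansion of the canonical form of $\bF_\bm(t)$ yields Laurent positivity of $\bR_{1,n+m_1}\bR_{1,m_1}^{-1}$ with coefficients in $\Z_+[q,q^{-1}]$, and the symmetries $\sigma^i,\tau$ of Section \ref{warmupsec} extend the result to all $n\in\Z$, completing the proof.
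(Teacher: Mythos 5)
Your overall strategy coincides with the paper's: specialize the general non-commutative framework of Section \ref{ncgensec} to the quantum $Q$-system, compute the quasi-Wronskians and the variables $C_{\al,n}$, $D_{\al,n}$ explicitly as $q$-monomials of the initial data, and conclude via Theorem \ref{ncweights} together with the manifest positivity of the canonical continued fraction. However, two concrete assertions in your write-up are wrong. First, the identity $\Delta_{\al,n}=q^{c(\al,n)}\bR_{\al,n}$ cannot hold: a quasi-determinant reduces in the commutative limit to a \emph{ratio} of determinants, $\vert A\vert_{1,1}=\vert A\vert/\vert A^{1,1}\vert$, so the quasi-Wronskian is $\Delta_{\al,n}=q^{-\frac{1}{2}(\al-1)(2r+2-\al)}\,\bR_{\al,n}\bR_{\al-1,n-1}^{-1}$, not $\bR_{\al,n}$ itself. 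One still gets Laurent monomials for $C_{\al,n}$ and $D_{\al,n}$, but your explicit exponents and the resulting formulas for the weights $\by_{2\al}(\bm_0)$ would come out incorrect. Second, and more seriously, the step you call the heart of the proof asserts that the two summands of $\by_{2\al-1}(\bm)+\by_{2\al}(\bm)$ ``differ only by a pure power of $q$.'' They do not: they are the images of $\bR_{\al,n}^2$ and of $\bR_{\al+1,n}\bR_{\al-1,n}$, which are genuinely distinct monomials. If they did differ by a power of $q$, the sum would equal $(1+q^j)$ times a monomial of $\bx_\bm$, and its \emph{inverse} --- which enters the mutation rules \eqref{caseonenc}--\eqref{casetwonc} --- would fail to be Laurent-positive, breaking your induction. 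The correct mechanism is that the quantum exchange relation converts the binomial into a single monic monomial of the mutated cluster $\bx_{\bm'}$.

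Beyond these errors, your induction on mutations is redundant: Theorem \ref{ncweights} already expresses the skeleton weights for \emph{every} Motzkin path $\bm$ in terms of the $C,D$ variables, and Section \ref{ncgensec} has already established that $\bR_{1,n+m_1}\bR_{1,m_1}^{-1}$ is a positive Laurent polynomial in those variables for arbitrary $\bm$. The paper's proof therefore consists only of the two displayed computations of $\Delta_{\al,n}$, $C_{\al,n}$ and $D_{\al,n}$, followed by the observation that these are Laurent monomials of $\bx_\bm$ times integer powers of $q$; no fresh induction over mutation sequences is needed.
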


\begin{proof}
We simply have to compute the quasi-Wronskians:
$$\Delta_{\al,n}=q^{-{1\over 2}(\al-1)(2r+2-\al)}\, \bR_{\al,n}\bR_{\al-1,n-1}^{-1} $$
and the corresponding values of the $C,D$ variables:
$$C_{\al,n}=q^{\al(\al-1)\over 2} \, \bR_{\al,n}\bR_{\al,n-1}^{-1} \qquad D_{\al,n}= \bR_{\al+1,n}\bR_{\al,n}^{-1}
\bR_{\al,n-1}^{-1}\bR_{\al-1,n-1}$$
which are all Laurent monomials of the initial data, with possibly some
integer power of $q$ in factor. The Theorem follows immediately from the expressions of
Theorem \ref{ncweights}.
\end{proof}

\end{document}